\newtheorem{theorem}{Theorem}
\newtheorem{corollary}{Corollary}
\newtheorem{proposition}{Proposition}
\newtheorem{lemma}{Lemma}
\newtheorem{remarks}{Remark}
\newcounter{mytempeqncnt}
\begin{document}
%
\title{A Stochastic Geometry Analysis of Large-scale Cooperative Wireless Networks Powered by\\ Energy Harvesting}

\author{Talha~Ahmed~Khan, Philip~Orlik, Kyeong~Jin~Kim, Robert~W.~Heath~Jr., and~Kentaro~Sawa
\thanks{T. A. Khan and R. W. Heath Jr. are with The University of Texas at Austin, Austin, TX, USA (Email: \{talhakhan, rheath\}@utexas.edu). %
{P. Orlik and K. J. Kim are with Mitsubishi Electric Research Lab (MERL), Cambridge, MA, USA (Email:\{porlik, kkim\}@merl.com). K. Sawa is with Mitsubishi Electric Corporation IT R\&D Center, Kamakura, Kanagawa, Japan (Email: Sawa.Kentaro@bk.MitsubishiElectric.co.jp).}}
\thanks{This work was initiated while T. A. Khan was with MERL, and was supported by a gift from MERL.} %
\thanks{Parts of this paper were presented at the 2015 IEEE International Conference on Communications \cite{talha2015}.}}
\maketitle
\begin{abstract}
Energy harvesting is a technology for enabling green, sustainable, and autonomous wireless networks. In this paper, a large-scale wireless network with energy harvesting transmitters is considered, where a group of transmitters forms a cluster to cooperatively serve a desired receiver amid interference and noise. 
To characterize the link-level performance, closed-form expressions are derived for the transmission success probability at a receiver in terms of key parameters such as node densities, energy harvesting parameters, channel parameters, and cluster size, for a given cluster geometry.
The analysis is further extended to characterize a network-level performance metric, capturing the 
tradeoff between link quality and the fraction of receivers served.  
Numerical simulations validate the accuracy of the analytical model.
Several useful insights are provided. For example, while more cooperation helps improve the link-level performance, the network-level performance might degrade with the cluster size. 
Numerical results show that a small cluster size (typically 3 or smaller) optimizes 
the network-level performance. 
Furthermore, substantial performance can be extracted with a relatively small energy buffer. Moreover, the utility of having a large energy buffer increases with the energy harvesting rate as well as with the cluster size in sufficiently dense networks.

\end{abstract}

\begin{IEEEkeywords}
Energy harvesting, stochastic geometry, cooperative wireless networks.
\end{IEEEkeywords}

%
\IEEEpeerreviewmaketitle
\section{Introduction}
Energy harvesting is a promising approach for realizing self-powered wireless networks.
A wireless device equipped with energy harvesting capability may extract energy from natural or man-made sources such as solar radiations, wind, radio frequency (RF) signals, indoor lighting, etc. \cite{EHreview2015}. Energy harvesting could potentially transform both infrastructure-based as well as ad hoc wireless networks. For instance, in cellular systems, energy harvesting could help reduce the operating expenditures for the cell-sites, reduce the carbon footprint as well as facilitate cell-site deployment \cite{dhillon2014fund}. Similarly, energy harvesting is also closely related to the Internet of Things \cite{IoT2014}, which broadly is a network consisting of everyday objects such as machines, buildings, vehicles, etc. Many of these \emph{smart} objects will contain low-power wireless sensors that communicate with other devices and/or a cloud/control unit. Energy harvesting can potentially enhance the battery lifetimes while simplifying the network maintenance (for instance, with an energy harvesting device, no human intervention would be needed for battery replacement), thus providing the much-needed autonomy for sustaining such networks\cite{IoT2014,kulkarni2011ehsurvey,EHreview2015}.        
Energy harvesting devices need new communication protocols.
Due to limited energy storage capacity and depending on the type of harvesting, the energy availability at the device varies over time. This leads to a model where energy arrivals are bursty. Several papers have proposed optimal transmission policies assuming causal or non-causal knowledge about energy arrivals for different setups (see \cite{EHreview2015,gunduz2014designing} for a comprehensive review). For example, a point-point link \cite{ozel2011transmission,zhang2012eh}, an interference channel \cite{kaya2012int}, and a broadcast channel \cite{broadcast2011energy} have been considered. 
While prior research has mostly considered simple information-theoretic setups, some recent studies have 
investigated the network-level dynamics in large \textit{non-cooperative} wireless networks powered by energy harvesting \cite{huang2013Spatial,vaze2013gsip,huang2013cog,dhillon2014fund}.

Stochastic geometry is emerging as a popular tool for analyzing a variety of setups ranging from ad hoc, to cognitive and cellular networks. It often leads to tractable analytical models that yield general performance insights, thus obviating the need of exhaustive simulations \cite{haenggi2012stochastic}. 
The performance of ad hoc networks was characterized using metrics such as outage probability and transmission capacity \cite{baccelli2006aloha,weber2010tcap,elsawy2013survey}.
Similar analysis has been applied to single and multi-tier cellular networks under different assumptions about cell association, scheduling and power control \cite{elsawy2013survey,andrews2011tractable,dhillon2012modeling}. Multi-cell cooperation has been analyzed for different cooperation models in \cite{lee2015base,akoum2013interference,baccelli2014pairwise,nigam2013coordinated,tanbourgi2013tractable}. 
For example, dynamic coordinated beamforming was treated in \cite{lee2015base}, random clustering with intercell interference nulling was considered in \cite{akoum2013interference}, and pairwise cooperation with limited channel knowledge was analyzed in \cite{baccelli2014pairwise}. Similarly, joint transmission without prior channel knowledge and/or tight synchronization has also been considered \cite{nigam2013coordinated,tanbourgi2013tractable}.
None of the aforementioned work\cite{lee2015base,akoum2013interference,baccelli2014pairwise,nigam2013coordinated,tanbourgi2013tractable} on \textit{cooperative} networks, however, considers energy harvesting.

Stochastic geometry has been used to analyze energy harvesting systems. Large-scale self-powered ad hoc networks have been analyzed in \cite{huang2013Spatial} and \cite{vaze2013gsip}. In \cite{huang2013Spatial}, the network model consists of a large number of energy harvesting transmitters, where each transmitter has a dedicated receiver located a fixed distance away. Leveraging tools from stochastic geometry and random walk theory, spatial throughput was derived by optimizing over the transmission power. For a similar setup, the authors in \cite{vaze2013gsip} derived the transmission capacity for a random access network by optimizing over the medium access probability. 
Self-powered heterogeneous cellular networks have been considered in \cite{dhillon2014fund}. In \cite{dhillon2014fund}, base-station availability (i.e., the fraction of the time it can remain ON) was analytically characterized using tools from random walk theory and stochastic geometry.  
The work in \cite{dhillon2014fund,huang2013Spatial} and \cite{vaze2013gsip}, however, does not consider any node cooperation or joint transmission at the physical layer. 
Cooperative/joint transmission seems particularly attractive for energy harvesting networks, as it could compensate for the performance loss due to uncertain energy availability at the transmitters.


In this paper, we consider a large-scale network of transmitters and receivers, where a receiver node is jointly served by a cluster consisting of its $K$ closest self-powered transmitter nodes.
This model is attractive for many scenarios involving energy harvesting wireless communications such as self-powered sensor networks, self-powered wireless hotspots, and other IoT-inspired applications of the future\cite{IoT2014,EHreview2015,kulkarni2011ehsurvey}.
We provide a tractable framework to characterize the system performance as a function of key parameters such as the cluster size, the energy harvesting capability, the transmitter/receiver densities and other network and channel parameters.  
We model the locations of the transmitters and receivers using independent Poisson point processes (PPPs). 
To reap the benefits of cooperation, the transmitters are grouped into clusters such that all the in-cluster transmitters jointly serve a common receiver, which is subjected to interference from the out-of-cluster nodes. Channel acquisition and node coordination, which is formidable even for conventional networks, is typically exacerbated with energy harvesting nodes. This motivates us to adopt non-coherent joint transmission as the cooperation model.   
The performance of such a \textit{cooperative} self-powered wireless network in a stochastic geometry framework has not been analyzed in the literature.


The proposed analytical model captures the key interplay between the cluster size and the transmitter and receiver densities.
Note that a transmitter cluster may have multiple candidate receivers, only one of which will be served in a given resource. We therefore consider a performance metric that captures the two key events influencing the overall performance: (i) a receiver is selected for service (modeled via cluster access probability in Section \ref{secClus}), and (ii) the transmission is successful (modeled via link success probability in Section \ref{secLink}).
For the former, we propose an analytical approximation for the cluster access probability in terms of the cluster size and the ratio of the transmitter and receiver densities.
For the latter, we derive simple analytical expressions that characterize the link performance as a function of system parameters (e.g., energy harvesting rate, energy buffer size, transmitter density), channel parameters and cluster geometry, while accounting for the heterogeneous network interference. Leveraging these results (each being a novel contribution in itself), a closed-form analytical expression is derived for the overall performance metric, and validated using simulations.

We also investigate the impact of cluster size, energy harvesting rate and energy buffer size on the overall performance. Our findings suggest that (i) there is an optimal cluster size that maximizes the overall performance given the density parameters; (ii) the optimal cluster size increases with the ratio of transmitter and receiver densities and typically ranges from 1 to 3; (iii) a relatively small energy buffer size (typically large enough to store 10 or fewer transmissions in the considered setup) is sufficient for extracting performance gains; and (iv) the utility of having a large energy buffer increases with the energy harvesting rate as well as with the cluster size when the density ratio is sufficiently large. 
Our analytical model is applicable to a general class of networks, with the traditionally-powered cooperative and non-cooperative networks as special cases. 

The rest of the paper is organized as follows. The system model is described in Section \ref{secSYS}. Using tools from stochastic geometry, the analytical expressions for the considered performance metrics are derived in Section \ref{secSTOCH}. Section \ref{secSIM} presents the simulation results and Section \ref{secCONCLUSION} concludes the paper.
\section{System Model}\label{secSYS}
We now describe the energy harvesting model, the underlying assumptions about the considered network, and the cooperation scenario.
\subsection{Energy Harvesting Model}\label{secEH}
We consider a large wireless network consisting of transmitters that are equipped with energy harvesting modules (e.g., RF energy harvesting). The energy arrivals are assumed to be random and independent across nodes. None of the transmitters are privy to any non-causal information about energy arrivals.
We now describe the energy harvesting model for an arbitrary transmitter equipped with an energy buffer of size $S\in\mathbb{N}$. 
The energy arrives at the buffer with rate $\rho$ following an independent and identically distributed (IID) Bernoulli process\footnote{
Due to analytical tractability, this is a common approach for modeling an energy harvesting process (e.g., see\cite{jeon2015stability,vaze2013gsip,ibrahim2015stability}).
Conceptually, the energy harvesting rate in this model approximates the average energy arrival rate of an actual (continuous) energy harvesting process.  
}, i.e., with probability $\rho$, one unit of energy arrives at the buffer in time-slot $t$, while $1-\rho$ is the probability that no energy arrives at the buffer in that slot. A node may choose to transmit with fixed power $P$ if it has sufficient energy in the buffer. No power control is assumed, therefore each transmission depletes the buffer of $P$ units of energy. The energy arrivals are modeled using a birth-death Markov process along the lines of \cite{huang2013Spatial,vaze2013gsip,jeon2015stability,ibrahim2015stability}. 

For medium access, we consider a slotted ALOHA based random access protocol where in each time-slot, a node (having sufficient energy) accesses the medium with probability $p_{\text{ch}}$ independently of other nodes. 
Let $p_S$ denote the probability that a node has the requisite amount of energy available in the buffer of size $S$.
Then, $p_S=\Pr\left[A_S(t)\geq P\right]$, where $A_S(t)$ denotes the state (i.e., energy level) of the buffer at time $t$. 
We now define $p_{\text{tr}}$, the transmission probability of an arbitrary node, and express it as a function of system parameters. 

\begin{lemma}\normalfont\label{lem1}
	For energy arrivals with rate $\rho >0$, finite energy buffer of size $S\in\mathbb{N}$, and channel access probability $p_{\text{ch}}>0$, it follows that $p_{\text{tr}}=p_{\text{ch}} p_S$ in steady state, where 
	\begin{align}
		p_S &=
		\begin{cases}
			\frac{\rho}{\rho+p_{\text{ch}}-\rho p_{\text{ch}}} & S=1 \\
			\frac{\frac{\rho}{p_{\text{ch}}}\left(1-\left(\frac{\rho\left(1-p_{\text{ch}}\right)}{p_{\text{ch}}\left(1-\rho\right)}\right)^{S}\right)}{1-\frac{\rho}{p_{\text{ch}}}\left(\frac{\rho\left(1-p_{\text{ch}}\right)}{p_{\text{ch}}\left(1-\rho\right)}\right)^{S}} & {S>1}, \rho\neq p_{\text{ch}}  \\
			\frac{S}{S+1-\rho} & {S>1}, \rho=p_{\text{ch}}
		\end{cases}
	\end{align}
for the case $P=1$.	
\end{lemma}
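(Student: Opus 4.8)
The plan is to model the buffer occupancy $A_S(t) \in \{0, 1, \dots, S\}$ as a discrete-time birth-death Markov chain and compute its stationary distribution, from which $p_S = \Pr[A_S \ge 1]$ (recall $P = 1$) follows immediately; then $p_{\text{tr}} = p_{\text{ch}} p_S$ because a node transmits iff it both wins the ALOHA contention (probability $p_{\text{ch}}$, independent of everything) and has energy. First I would identify the one-step transition probabilities: from a state $i$ with $1 \le i \le S-1$, the level goes up to $i+1$ if energy arrives and the node does not transmit (probability $\rho(1-p_{\text{ch}})$), down to $i-1$ if no energy arrives but the node transmits (probability $(1-\rho)p_{\text{ch}}$), and stays at $i$ otherwise; from state $0$ there is no transmission possible, so it moves to $1$ with probability $\rho$ and stays at $0$ with probability $1-\rho$; at the top state $S$ an arriving unit is lost, so it moves down to $S-1$ with probability $(1-\rho)p_{\text{ch}}$ and stays otherwise. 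This is a standard birth-death structure, so I would invoke detailed balance $\pi_i \, q_{i,i+1} = \pi_{i+1}\, q_{i+1,i}$ rather than solving the full balance equations.

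Next I would solve the detailed-balance recursion. Writing $\alpha = \rho(1-p_{\text{ch}})$ for the up-rate on the interior and $\beta = (1-\rho)p_{\text{ch}}$ for the down-rate, the boundary at $0$ gives $\pi_1 = (\rho/\beta)\pi_0$, while each interior step gives $\pi_{i+1} = (\alpha/\beta)\pi_i$, so $\pi_i = (\rho/\beta)(\alpha/\beta)^{i-1}\pi_0$ for $1 \le i \le S$. Let $r = \alpha/\beta = \dfrac{\rho(1-p_{\text{ch}})}{p_{\text{ch}}(1-\rho)}$ and $c = \rho/\beta = \dfrac{\rho}{p_{\text{ch}}(1-\rho)}$. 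Normalizing via $\pi_0 + \sum_{i=1}^{S}\pi_i = 1$ gives $\pi_0 = \bigl(1 + c\sum_{i=0}^{S-1} r^i\bigr)^{-1}$, hence
\[
p_S = 1 - \pi_0 = \frac{c\sum_{i=0}^{S-1} r^i}{1 + c\sum_{i=0}^{S-1} r^i}.
\]
When $r \ne 1$ the geometric sum is $\sum_{i=0}^{S-1} r^i = (1-r^S)/(1-r)$; substituting and using $c(1-r) = \rho/p_{\text{ch}} \cdot \bigl(1 - \tfrac{1-p_{\text{ch}}}{1-\rho}\bigr) \cdot \tfrac{1}{\,\cdot\,}$ — i.e. simplifying $c/(1-r)$ — collapses numerator and denominator to the stated $S>1$, $\rho \ne p_{\text{ch}}$ expression (the $c/(1-r)$ factor is exactly $\rho/p_{\text{ch}}$ after cancellation). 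For $S=1$ the sum is just $c$, giving $p_1 = c/(1+c) = \rho/(\rho + p_{\text{ch}} - \rho p_{\text{ch}})$ after clearing denominators. For $r=1$ (equivalently $\rho = p_{\text{ch}}$) the sum is $S$ and $c = 1/(1-\rho)$, so $p_S = (S/(1-\rho))/(1 + S/(1-\rho)) = S/(S+1-\rho)$.

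The only real subtlety — and the step I'd be most careful with — is the algebraic simplification in the $\rho \ne p_{\text{ch}}$ case: one must verify that the common factor relating $c$, $r$, and $1-r$ reduces cleanly to $\rho/p_{\text{ch}}$ and that the resulting fraction matches the paper's form exactly, since there are several equivalent ways to write it. A secondary point worth a sentence is well-posedness of the stationary distribution: the chain on $\{0,\dots,S\}$ is finite, irreducible (since $\rho > 0$ and $p_{\text{ch}} > 0$ make all neighboring transitions positive), and aperiodic (the self-loop at $0$ has probability $1-\rho > 0$ unless $\rho = 1$, and even then state $1$ has a self-loop), so a unique stationary distribution exists and is the steady-state occupancy; I would state this briefly before doing the computation. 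Everything else is routine geometric-series bookkeeping.
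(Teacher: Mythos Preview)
Your approach is exactly what the paper intends: its proof is the single sentence ``It can be proved by solving the balance equations for the resulting Markov chain (e.g., see \cite{huang2013Spatial}),'' so you have in fact supplied the details the paper omits, and the birth--death/detailed-balance setup is correct. One small correction in the algebra you flagged: $c/(1-r)$ equals $\rho/(p_{\text{ch}}-\rho)$, not $\rho/p_{\text{ch}}$; the clean way to finish is to observe that $(1-r)+c = 1/(1-\rho)$, so writing $p_S = \dfrac{c(1-r^S)}{(1-r)+c - c r^S}$ and factoring $1/(1-\rho)$ from top and bottom yields $p_S = \dfrac{(\rho/p_{\text{ch}})(1-r^S)}{1-(\rho/p_{\text{ch}})\,r^S}$ as stated.
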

\begin{proof}
	It can be proved by solving the balance equations for the resulting Markov chain (e.g., see \cite{huang2013Spatial}).
\end{proof}
Note that $0<p_{\text{tr}}\leq p_{\text{ch}}$ since $\lim\limits_{\rho\rightarrow 1}p_{\text{tr}}=p_{\text{ch}}$, where $\rho=1$ corresponds to the case when the node is powered by conventional power sources. Furthermore, $p_{\text{ch}}$ is fixed throughout the network. Therefore, the transmission probability of a node varies as a function of the energy harvesting rate and buffer size. In other words, the higher the $p_{\text{tr}}$ of a node, the more superior the energy harvesting capability (i.e., harvesting rate and/or buffer size). Note that Lemma 1 has been specialized for the case $P=1$ for simplicity. For other values of $P$, it is possible to solve the balance equations of the corresponding Markov chain to calculate $p_S$. 
\subsection{Network Model} 
\begin{figure}[t]
	\centering
	\framebox{\includegraphics[width=3.0in]{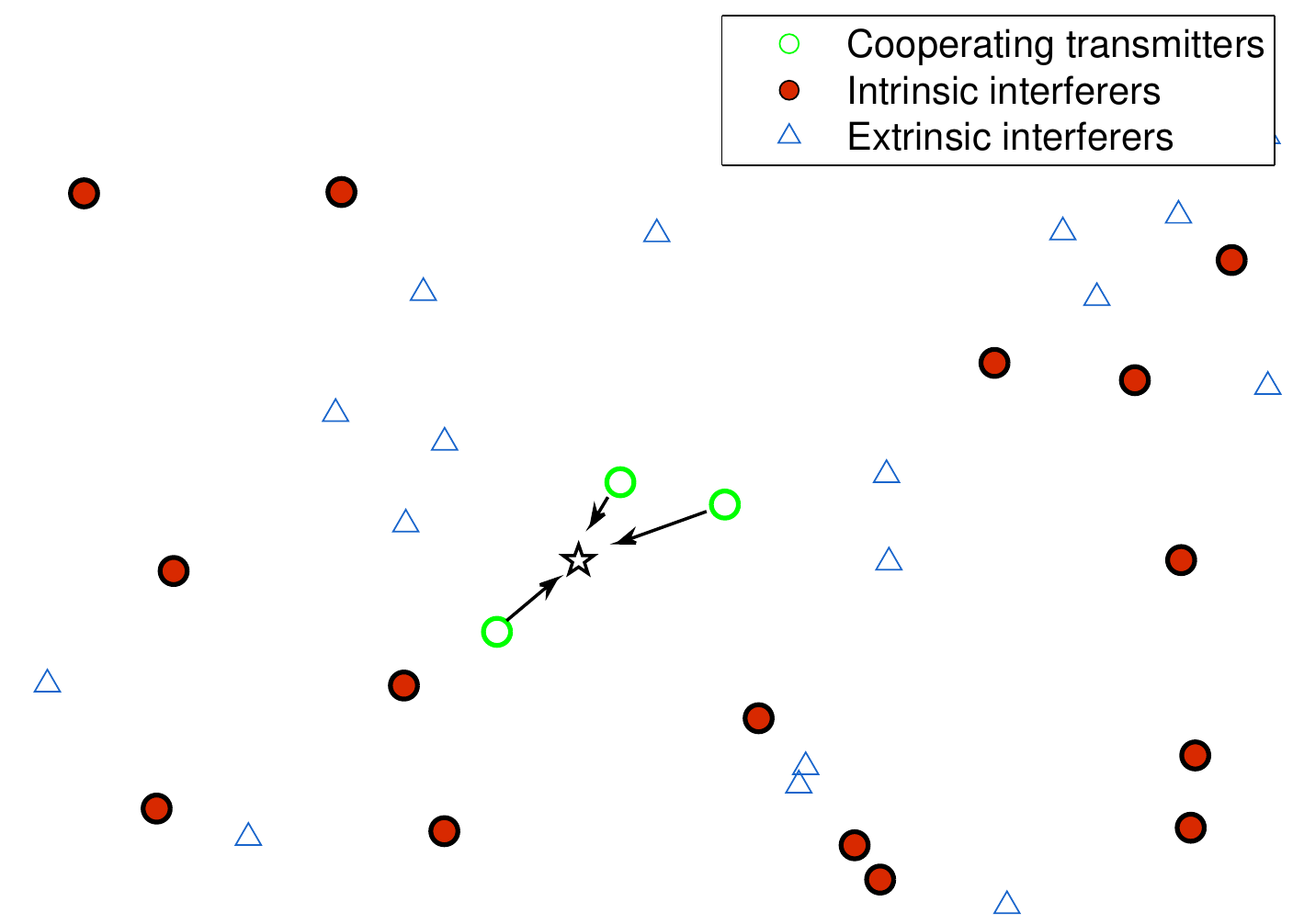}}
	\caption{A network snapshot showing a receiver \text{(\FiveStarOpen)} jointly served by a cluster of $K=3$ closest transmitters amid intrinsic and extrinsic interference. Only one receiver is shown for illustration purpose.}
	\label{fig:netmodel}
\end{figure} 
In our setup, a cluster of $K$ (self-powered) cooperating transmitters (TXs) jointly serve a desired receiver (RX) or user over the same time-frequency resource block. We assume that each user is served by the $K$ closest TXs (see Fig. \ref{fig:netmodel}). 
The TX locations are drawn from a homogeneous PPP of intensity (density) $\lambda$, which we denote as $\Phi\triangleq\{d_{i}, i\in\mathbb{N}\}$. We will refer to $\Phi$ as the TX tier. Similarly, the user locations are modeled using another PPP $\Phi_u$ of intensity $\lambda_u$, which is assumed to be independent of $\Phi$. 
When there are multiple candidate users seeking a given TX cluster, we assume that a user is selected uniformly at random for service each time-slot (see Section \ref{secClus}). We further assume that the TXs in $\Phi$ are active with a transmission probability $p_\textrm{tr}$ (Lemma \ref{lem1}, Section \ref{secEH}). For example, when a TX has no candidate user in $\Phi_u$ (i.e., void cell), it may still transmit to (wirelessly) charge other inactive users, or to serve other opportunistic users in the network (performance characterization of such users, though, is not the focus of this work). 

Consider an arbitrary user in $\Phi_u$ being jointly served by its $K$ closest TXs in $\Phi$.
Due to concurrent transmissions, it is subjected to co-channel interference from the out-of-cluster nodes. In general, we expect such a network to consist of nodes with different physical parameters (e.g., energy harvesting capability) and random locations (e.g., due to unplanned deployments). To model this heterogeneous network interference, we consider $M$ additional tiers of nodes, where the nodes in tier $m$ are located according to a homogeneous PPP $\Phi_m\triangleq\{d_{i,m},i\in\mathbb{N}\}$ of intensity $\lambda_m$, independently of other tiers. For compactness, we also introduce an alternative notation for the nodes in TX tier $\Phi$. Specifically, the subscript 0 is used while referring to the quantities of nodes in the TX tier when confusion might arise, e.g., $\Phi_0=\Phi$, $\lambda_0=\lambda$. Note that each class of nodes may differ in terms of energy harvesting rate $\rho_m$, energy buffer size $S_m$, transmit power $P_m$, and intensity $\lambda_m$. Without loss of generality, we assume the transmit power of the TX tier to be normalized
to unity. Therefore, $P_m$ also corresponds to the normalized transmit power of tier $m$, where
the normalization is done with respect to the actual transmit power of the TX tier. 
All the nodes are assumed to be equipped with single antennas. 
\subsection{Signal Model}\label{signal model}
All the nodes are assumed to employ orthogonal frequency division multiple access (OFDMA) for communication. We consider a transmission scheme where a group of $K$ cooperating TXs jointly transmit the same data to a given user over the same time-frequency resource block. Given the challenges associated with channel acquisition, none of the transmitting nodes are assumed to have any instantaneous channel knowledge. The considered joint transmission scheme is simple as it does not require joint encoding at the cooperating transmitters. To further reduce the coordination overhead, we do
not assume any tight synchronization among the in-cluster TXs. The user, however, is required to know the composite downlink channel from the in-cluster transmitters for coherent detection. The signals transmitted by the cooperating TXs superimpose non-coherently at the receiver, resulting in a received power boost. Moreover, interference seen by the user is treated as noise for the purpose of decoding. 

We now describe the channel model.
Let $H_i$ be the channel power gain for the link from a TX $i$ in $\Phi$ to the given user. We consider a rich scattering environment where all the links experience
IID narrowband Rayleigh fading such that the small-scale fading power is exponentially distributed, i.e., $H_i\,${\raise.17ex\hbox{$\scriptstyle\mathtt{\sim}$}}$\, \exp(1)$. Leveraging  Slivnyak's theorem \cite{haenggi2012stochastic}, we consider a typical user located at the origin, and characterize the performance in the presence of co-channel interference and noise.
Note that the timing offset between cooperating transmitters causes the received signal power to vary substantially across a large number of subcarriers within the coherence bandwidth (particularly when the timing offset and coherence bandwidth are assumed to be relatively large). Therefore, we can consider the average received power across these subcarriers for analysis (along the lines of \cite{tanbourgi2013tractable}).  
With such a non-coherent joint transmission scheme (see \cite[Appendix A]{tanbourgi2013tractable} for details), the
signal-to-interference-plus-noise ratio (SINR) at the user can be expressed as
\begin{align}\label{snr}
	\gamma\triangleq
	\frac{\sum\limits_{i=1}^{K}{\mathbbm{1}_{i}d_{i}^{-\eta}}H_{i}}{{I}+\sigma^2} 
\end{align}
where the Bernoulli random variable $\mathbbm{1}_{i}$ models the uncertainty due to bursty energy arrivals at the transmitter such that $\Pr\left[{\mathbbm{1}_{i}}=1\right]=p_{\text{tr},i}$ and $\Pr\left[\mathbbm{1}_{i}=0\right]=1-p_{\text{tr},i}\triangleq q_{\text{tr},i}$ for the in-cluster TXs (i.e., $1\leq i\leq K$), $\eta$ denotes the pathloss exponent, while $\sigma^2$ gives the variance of the receiver noise, which we assume to be zero-mean circularly symmetric complex Gaussian. Moreover, ${I}$ denotes the aggregate interference power observed at the receiver. 
For analytical tractability, it is assumed that the signals transmitted by the interfering nodes superimpose non-coherently at the receiver, which would typically be the case. The aggregate interference power ${I}$ can be expressed as
\allowdisplaybreaks{
\begin{align}\label{secnd}
{I}&={I}_0+\sum\limits_{m=1}^{M}{I}_m \nonumber \\
&=\underbrace{\sum\limits_{i=K+1}^{\infty}{\mathbbm{1}_{i}d_{i}^{-\eta}}H_{i}}_{\textrm{intrinsic}}+\underbrace{\sum\limits_{m=1}^{M}\sum_{d_{i,m}\in\Phi_m}{\mathbbm{1}_{i,m}}P_m d_{i,m}^{-\eta}H_{i,m}}_{\textrm{extrinsic}}
\end{align} 
}where the first term $I_0$ accounts for the in-network or intrinsic interference due to the out-of-cluster TXs in $\Phi$. 
Here, $\Pr\left[{\mathbbm{1}_{i}}=1\right]=p_{\text{tr},o}$ while $\Pr\left[{\mathbbm{1}_{i}}=0\right]=1-p_{\text{tr},o}\triangleq q_{\text{tr},o}$ for all the out-of-cluster TXs (i.e., $i>K$). 
The second term in (\ref{secnd}) models the extrinsic or out-of-network interference from the nodes belonging to the $M$ interfering tiers $\{\Phi_m\}_{m=1}^{M}$. Note that for the interfering tiers, we use a slightly modified notation by including $i,m$ in the subscript to denote a node $i$ that belongs to the interfering tier $\Phi_m$. As done for the TX tier $\Phi$, we can similarly define $\Pr\left[{\mathbbm{1}_{i,m}}=1\right]=p_{\text{tr}}^{(m)}$ for the nodes in tier $m$. The assumptions about the channel model are as explained for the TX tier $\Phi$, i.e., $H_{i,m}\,${\raise.17ex\hbox{$\scriptstyle\mathtt{\sim}$}}$\,\exp(1)$. 

\begin{table}
	\caption{Model Parameters}	
	\centering
	\begin{tabular}{| p{1.in} | p{2.0in} |}
		\hline
		\textbf{Notation} & \textbf{Description} \\ \hline
		$K$ & cluster size \\ \hline
		$\eta$ & path-loss exponent \\ \hline
		$p_{\text{ch}}$ & channel access probability \\ \hline
    	$p_{\text{tr}}\triangleq 1-q_{\text{tr}}$ & transmission probability \\ \hline
    	$\rho$ & energy harvesting rate \\ \hline
    	$S$ & energy buffer size \\ \hline
    	$\Phi_u; \lambda_u$ & PPP with intensity $\lambda_u$ modeling RX locations.\\ \hline 	
    	$\Phi; \lambda$ (or $\Phi_0, \lambda_0$) & PPP with intensity $\lambda$ (also denoted as $\lambda_0$) modeling TX locations.\\ \hline 
    	${\{p_{\text{tr,i}}\}_{i=1}^{K}}$ & transmission probabilities of $K$ in-cluster TXs in $\Phi$. \\ \hline
    	$p_{\text{tr},o}$ & transmission probability of out-of-cluster TXs in $\Phi$. \\ \hline   	
    	${\Phi_m;\lambda_m}$\newline ($1\leq m\leq M$) & PPP with intensity $\lambda_m$ modeling node locations in tier $m$. \\ \hline
    	$p_{\text{tr}}^{(m)}$\newline ($1\leq m\leq M$) & transmission probability of nodes in $\Phi_m$. \\ \hline
    	$P_m$ \newline($1\leq m\leq M$) & normalized transmit power of nodes in $\Phi_m$. \\ \hline
	\end{tabular}
	\label{tbl1}
\end{table}
\textbf{Notation.} Table \ref{tbl1} summarizes the notation introduced in this section. We adopt the following notation for the transmission probabilities of the nodes belonging to tier $\Phi$. For $i=1,\cdots,K$, we define $p_{\text{tr},i}\triangleq 1-q_{\text{tr},i}$ to be the transmission probability of the $i^{th}$ in-cluster TX belonging to $\Phi$, whereas $p_{\text{tr},o}$ gives the transmission probability of all other (i.e., out-of-cluster) TXs in $\Phi$. Similarly, for $m=1,\cdots,M$, we define $p_{\text{tr}}^{(m)}\triangleq 1-q_{\text{tr}}^{(m)}$ to be the transmission probability of the nodes belonging to the interfering tier $\Phi_m$. The above notation allows both the in-cluster and out-of-cluster nodes to have different transmission probabilities. This is in line with the considered model, where we have allowed the nodes to have possibly different energy harvesting capabilities.
For ease of exposition, we define $\Xi=\left[q_{\text{tr},1},\cdots,q_{\text{tr},K}, q_{\text{tr},o}, q_{\text{tr}}^{(1)},\cdots,q_{\text{tr}}^{(M)}\right]$, which depends on the energy harvesting parameters (i.e., energy harvesting rate and energy buffer size). We also define 
\begin{align}\label{QQ}
G=\prod\limits_{i=1}^{K}q_{\text{tr},i}.
\end{align}

For the TXs (in $\Phi$) belonging to a cluster of size $K$, we define $\omega_i=\frac{d_i}{d_K}$ such that $\{\omega_i\}_{i=1}^{K}$ denotes a set of normalized distances. This set is assumed to be arranged in ascending order, i.e., $d_1$ refers to the closest serving TX while $d_K$ refers to the TX located farthest away from the user. We also define $\Omega=\{\omega_1^\eta,\cdots,\omega_K^\eta\}$ and $\hat{\Omega}=\{\frac{\omega_1^\eta}{q_{\text{tr},1}},\cdots,\frac{\omega_K^\eta}{q_{\text{tr},K}}\}$. For generality, we allow $\Omega$ to have duplicate elements and further define the set $\{\delta_1^\eta,\cdots,\delta_\tau^\eta\}$ to consist of all the unique elements of the set $\Omega$, where $\delta_i^\eta$ occurs in $\Omega$ with multiplicity $n_i$. Note that $\tau=1,\cdots,K$, where $\tau=1$ denotes the case when $\Omega$ has identical elements, whereas $\tau=K$ when $\Omega$ has distinct elements\footnote{Note that $\Omega$ is a multiset since it may have duplicate elements. For cleaner exposition, however, we call $\Omega$ (and other multisets) a set in this paper.}.  
We further define ${K\choose K-i}_\Omega$ to be the set of all products of the elements of $\Omega$ taken $K-i$ at a time. For instance, when $K=3$, ${3\choose 1}_\Omega=\{\omega_1^\eta,\omega_2^\eta,\omega_3^\eta\}$, ${3\choose 2}_\Omega=\{\omega_1^\eta\omega_2^\eta,\omega_2^\eta\omega_3^\eta,\omega_3^\eta\omega_1^\eta\}$, and ${3\choose 3}_\Omega=\{\omega_1^\eta\omega_2^\eta\omega_3^\eta\}$. We also define a set operator ${\overset{+}{\sum}}\left[\cdot\right]$ that returns the sum of the elements of the set that it operates on. We further define
\begin{equation}\label{def4}
	\alpha_i(\Omega)={{\left(-1\right)}^{i}} {\overset{+}{\sum}}\left[{{K\choose K-i}_\Omega}\right]. 
\end{equation}
The summation in (\ref{def4}) is taken over the elements of the set ${K\choose K-i}_\Omega$. Similarly, the definition of $\alpha_i(\hat{\Omega})$ follows from (\ref{def4}) with the set $\Omega$ now replaced by $\hat{\Omega}$. For the intensity parameters, we define $\Lambda=\left[\lambda_0,\cdots,\lambda_M\right]$ (recall that $\lambda_0$ (or $\lambda$) gives the intensity of the PPP $\Phi_0$ (or $\Phi$), while $\{\lambda_m\}_{m=1}^{M}$ denote the same for the interfering tiers $\{\Phi_m\}_{m=1}^{M}$). 

\section{Stochastic Geometry Analysis}\label{secSTOCH}
In this section, we derive closed-form expressions for the complementary cumulative distribution function (CCDF) of the SINR $\gamma$ (which we call the link success probability), the cluster access probability, and the overall success probability.   
\subsection{Link Success Probability}\label{secLink}
In this subsection, we focus on the receivers which have been selected for service in a given resource. We provide closed-form expressions that characterize the CCDF of the SINR $\gamma$ at such a receiver as a function of network parameters and cluster geometry. While Theorem 1 is useful for a given cluster geometry, Theorem 2 is applicable to the general case with the absolute cluster geometry averaged out.    

\begin{theorem}\label{thm1}\normalfont
	For a cluster of size $K$, the CCDF of $\gamma$, ${\bar{F}}_\gamma\left(K,\theta\right)=\Pr\left[\gamma>\theta\right]$, can be tightly approximated as a function of the intensity parameters ($\Lambda$), noise power ($\sigma^2$), energy harvesting parameters ($\Xi$) and cluster geometry $\left(\{d_i\}_{i=1}^{K}\right)$ using
	\begin{align}\label{general}
		&{\bar{F}}_{\gamma}(K,\theta)\approx  \nonumber \\
		  & G\displaystyle \sum\limits_{u=1}^{\tau}\sum\limits_{v=1}^{n_u}\left(\sum\limits_{m=0}^{K-1}\left({\alpha_m(\hat{\Omega})}-\alpha_m(\Omega)\right)\textrm{A}_{m}(n_u,v)\right)
		\textrm{B}_{u,v}(\theta) 
	\end{align} 
	where
	\begin{align}\label{coeff}
	\textrm{A}_m(n_u,v)&=
	(-1)^{n_u-v}{\delta_u}^{-v\eta} \sum_{\sum_{i=1}^{\tau}k_i=n_u-v} 
	\binom{m}{k_u}{\delta_u}^{\eta(m-k_u)}\nonumber\\
	&\times\prod\limits_{j\neq u}^{\tau}{\binom{n_j+k_j-1}{k_j}\left({\delta_j}^\eta-{\delta_u}^\eta\right)^{-(n_j+k_j)}}.
	\end{align}
	The summation in (\ref{coeff}) is taken over all possible combinations of non-negative integer indices $k_1,\cdots,k_{\tau}$ that add up to $n_u-v$. Further,
	\begin{align}\label{B1}
			\mathrm{B}_{u,v}(\theta)=\sum\limits_{\ell=1}^{v}\binom{v}{\ell}(-1)^{\ell+1}\Delta_{u,\ell}(\theta)
    \end{align}  	
    where
	\begin{align}\label{delta}
		\Delta_{u,\ell}(\theta)={e^{-\theta \kappa \ell(d_K\delta_u)^\eta \sigma^{2}}} {e^{-\pi p_{\text{tr},o}{\lambda} d_K^2 \mathcal{F}\left({\delta_u^\eta}\theta\kappa\ell,\eta\right)}}\Psi_u\left(M\right),
	\end{align}
	\begin{align}
	\Psi_u\left(M\right)=\prod\limits_{m=1}^{M}{e^{-\pi p_{\text{tr}}^{(m)}\lambda_m {\delta_u}^2{d_K}^2{\left(\theta\kappa\ell P_m\right)}^{\frac{2}{\eta}}\Gamma\left(1+\frac{2}{\eta}\right)\Gamma\left(1-\frac{2}{\eta}\right)}},
	\end{align}
	with $\kappa={(v!)}^{-\frac{1}{v}}$ and
	\begin{align}\label{F}
		\mathcal{F}\left(t_1,t_2\right)=\frac{2\,t_1}{t_2-2} {_{2}F_1}\left(1,1-\frac{2}{t_2},2-\frac{2}{t_2},-t_1\right)
	\end{align}
	where $_{2}F_{1}(\cdot)$ is the Gauss hypergeometric function \cite{gasper2004basic}, and $\Gamma(\cdot)$ is the Gamma function.
\end{theorem}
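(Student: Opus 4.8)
The plan is to compute the CCDF $\bar F_\gamma(K,\theta)=\Pr[\gamma>\theta]$ by conditioning on the interference-plus-noise and then averaging over the desired-signal fading, using the standard stochastic-geometry machinery (Laplace transform of the interference) together with a tractable approximation for the tail of a sum of weighted exponentials. First I would condition on the PPPs and on the Bernoulli indicators $\{\mathbbm 1_i\}_{i=1}^K$ of the in-cluster transmitters. Given which subset $\mathcal S\subseteq\{1,\dots,K\}$ of serving transmitters is active, the numerator of $\gamma$ is $Z_{\mathcal S}=\sum_{i\in\mathcal S} d_i^{-\eta}H_i$, a sum of independent exponentials with distinct rates $d_i^\eta$ (allowing ties handled via the multiplicities $n_u$), and $\Pr[\gamma>\theta\mid \mathcal S]=\Pr[Z_{\mathcal S}>\theta(I+\sigma^2)\mid\text{PPPs}]$. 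The key analytic device is to approximate the CCDF of $Z_{\mathcal S}$ at $x$ by the Alzer-type bound $\Pr[Z>x]\approx$ a finite alternating sum $\sum_{\ell}\binom{v}{\ell}(-1)^{\ell+1}e^{-\kappa\ell\,(\text{rate})\,x}$ with $\kappa=(v!)^{-1/v}$ — exactly the structure appearing in $\mathrm B_{u,v}$ and $\Delta_{u,\ell}$ — which reduces each term to an expression of the form $\mathbb E[e^{-s(I+\sigma^2)}]$, i.e. the Laplace transform of the aggregate interference evaluated at $s=\theta\kappa\ell(d_K\delta_u)^\eta$.

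Next I would carry out the partial-fraction / residue expansion that turns $\Pr[Z_{\mathcal S}>x]$ into a linear combination of pure exponentials $e^{-c\,x}$. Writing the density of $Z_{\mathcal S}$ via its Laplace transform $\prod_{i\in\mathcal S}\frac{d_i^\eta}{d_i^\eta+s}$ and expanding in partial fractions over the distinct poles $\delta_u^\eta$ (with multiplicity up to $n_u$) produces precisely the coefficients $\mathrm A_m(n_u,v)$ in \eqref{coeff}: the inner sum over compositions $\sum k_i=n_u-v$ and the products $\binom{n_j+k_j-1}{k_j}(\delta_j^\eta-\delta_u^\eta)^{-(n_j+k_j)}$ are the standard higher-order partial-fraction residue formula for a product of simple poles. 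Averaging over the Bernoulli indicators $\{\mathbbm 1_i\}$ then assembles the serving-set contributions: the product $G=\prod_i q_{\text{tr},i}$ is pulled out as a common factor, and the combinatorial bookkeeping over which transmitters are active collapses into the elementary-symmetric-type coefficients $\alpha_m(\hat\Omega)-\alpha_m(\Omega)$ (here $\hat\Omega$ absorbs the $1/q_{\text{tr},i}$ weights coming from $q_{\text{tr},i}+p_{\text{tr},i}e^{-\text{something}}$, and subtracting $\alpha_m(\Omega)$ removes the all-silent term). This is the step I expect to be the main obstacle: getting the algebra of the multiset $\Omega$, the multiplicities $n_u$, the ``choose'' operator ${K\choose K-i}_\Omega$, and the Bernoulli averaging to line up exactly with the stated closed form requires careful tracking of normalizations by $d_K$ and of the index ranges, and it is easy to be off by a factor or to mishandle the repeated-pole case.

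Finally I would compute the two remaining Laplace transforms explicitly. For the intrinsic interference $I_0=\sum_{i>K}\mathbbm 1_i d_i^{-\eta}H_i$, the out-of-cluster points form (approximately, by the usual nearest-$K$ argument) a PPP of intensity $p_{\text{tr},o}\lambda$ outside radius $d_K$; applying the probability generating functional of the PPP and the exponential-fading moment generating function, then substituting $r\mapsto d_K r$, yields $\exp(-\pi p_{\text{tr},o}\lambda d_K^2\,\mathcal F(\delta_u^\eta\theta\kappa\ell,\eta))$ with $\mathcal F$ as in \eqref{F} (the ${}_2F_1$ arises from $\int_1^\infty(1-\frac{1}{1+s r^{-\eta}})r\,dr$). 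For each extrinsic tier $\Phi_m$, the interferers form a PPP of intensity $p_{\text{tr}}^{(m)}\lambda_m$ over all of $\mathbb R^2$ with per-node power $P_m$; the standard full-plane computation gives the $\exp(-\pi p_{\text{tr}}^{(m)}\lambda_m\,\delta_u^2 d_K^2(\theta\kappa\ell P_m)^{2/\eta}\Gamma(1+\tfrac2\eta)\Gamma(1-\tfrac2\eta))$ factor, which is $\Psi_u(M)$. Multiplying these together with the noise factor $e^{-\theta\kappa\ell(d_K\delta_u)^\eta\sigma^2}$ gives $\Delta_{u,\ell}(\theta)$, and collecting all pieces — the $G$ prefactor, the sums over $u,v$, the coefficients $(\alpha_m(\hat\Omega)-\alpha_m(\Omega))\mathrm A_m(n_u,v)$, and $\mathrm B_{u,v}(\theta)$ — produces \eqref{general}. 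I would close by noting that the only approximation is the Alzer tail bound on $Z_{\mathcal S}$ (and the nearest-$K$ PPP approximation for $I_0$), which accounts for the ``$\approx$'' in the statement, and that the simulations in Section \ref{secSIM} confirm the approximation is tight.
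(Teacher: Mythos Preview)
Your proposal has all the right ingredients --- the partial-fraction expansion for the signal CCDF, Alzer's inequality on the regularized incomplete Gamma, and the PGFL computations for the intrinsic and extrinsic interference --- and your Laplace-transform calculations for $I_0$ and $I_m$ match the paper exactly. The organization, however, differs from the paper in one important way, and there are two small inaccuracies worth flagging.

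The paper does \emph{not} condition on the active subset $\mathcal S\subseteq\{1,\dots,K\}$ and then average. Instead it works directly with $S_K=\sum_{i=1}^K\mathbbm 1_i\hat H_i$ (where $\hat H_i=H_i\omega_i^{-\eta}$): the Laplace/characteristic transform of each summand is $q_{\text{tr},i}+p_{\text{tr},i}\,\omega_i^\eta/(\omega_i^\eta+s)$, and the product over $i$ is a rational function whose partial-fraction expansion over the distinct poles $\{\delta_u^\eta\}$ (with multiplicities $n_u$) already carries both the numerator structure $\alpha_m(\hat\Omega)-\alpha_m(\Omega)$ and the residue coefficients $\mathrm A_m(n_u,v)$. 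Inverse transforming gives the CCDF of $S_K$ as a linear combination of regularized incomplete Gamma terms $\mathcal Q(v,\delta_u^\eta x)$ --- this is the paper's (A.1). Doing the Bernoulli averaging \emph{inside} the transform sidesteps precisely the ``main obstacle'' you anticipate: there is no sum over $2^K$ subsets and no post-hoc reassembly of the elementary-symmetric coefficients.

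Two smaller points. First, Alzer's bound is applied to each $\mathcal Q(v,\cdot)$ term \emph{after} the partial-fraction step, not directly to the CCDF of $Z_{\mathcal S}$; its role is to replace $\mathbb E\bigl[\mathcal Q\bigl(v,\delta_u^\eta\theta d_K^\eta(I+\sigma^2)\bigr)\bigr]$ --- which would otherwise require higher moments of $I$ --- by a finite alternating sum of Laplace-transform evaluations (this is where $\mathrm B_{u,v}$ and $\kappa=(v!)^{-1/v}$ come from). Second, given the $K$ nearest points of a homogeneous PPP, the remaining points are \emph{exactly} a PPP on $\mathbb R^2\setminus\mathcal B(d_K)$; this is not an approximation, so Alzer's inequality is the sole source of the ``$\approx$'' in the statement.
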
 
\begin{proof}
See Appendix A.
\end{proof}
\begin{remarks}\normalfont
Note that Theorem 1 allows the in-cluster TXs to have possibly different energy harvesting rates or buffer sizes, and is therefore useful for getting general insights about the performance when the cluster consists of heterogeneous TXs. Similarly, the multi-tier approach allows capturing the heterogeneity in out-of-cluster nodes. Furthermore, all the interfering TXs can be assumed to have the maximum harvesting rate/buffer size to get a lower bound on performance. 
\end{remarks}
\begin{corollary}\normalfont
\textit{$\theta\rightarrow0$}. In the low-outage regime, the performance is limited by the in-cluster energy harvesting parameters and the cluster size. In particular, as $\theta\rightarrow0$ in (\ref{general}), we get $\lim\limits_{\theta\to 0}{\bar{F}}_{\gamma}(K,\theta)=1-G$, where $G$ given in (\ref{QQ}) defines a limit on the performance and infact represents the exact outage probability in the asymptotic regime. In this regime, the performance is independent of the out-of-cluster parameters. This observation also holds for Theorem 2.
\end{corollary}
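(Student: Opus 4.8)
The plan is to obtain the limit directly from the SINR definition (\ref{snr}) and then to check that it is consistent with the closed-form expression (\ref{general}); throughout, all statements are understood conditionally on the cluster geometry $\{d_i\}_{i=1}^{K}$, as in Theorem 1.

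\emph{Direct argument.} The map $\theta\mapsto\bar F_\gamma(K,\theta)$ is the CCDF of a nonnegative random variable, and the events $\{\gamma>\theta\}$ increase to $\{\gamma>0\}$ as $\theta\downarrow0$; by continuity of probability from below, $\lim_{\theta\to0^+}\bar F_\gamma(K,\theta)=\Pr[\gamma>0]$, so it suffices to evaluate this probability. The denominator $I+\sigma^2$ is a.s.\ strictly positive and, for $\eta>2$ (implicit throughout), a.s.\ finite: conditioned on $\{d_i\}$ the out-of-cluster TXs form a PPP on $\{\|x\|>d_K\}$, so Campbell's theorem gives $\mathbb E[I_0\,|\,d_K]=2\pi p_{\text{tr},o}\lambda\,d_K^{2-\eta}/(\eta-2)<\infty$, and analogously for each extrinsic tier $I_m$. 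Hence $\{\gamma>0\}=\{\sum_{i=1}^{K}\mathbbm 1_i d_i^{-\eta}H_i>0\}$ a.s. Each summand is nonnegative, $d_i^{-\eta}>0$ deterministically, and $H_i>0$ a.s., so the numerator is positive exactly when $\mathbbm 1_i=1$ for some $i\in\{1,\dots,K\}$. Using independence of $\mathbbm 1_1,\dots,\mathbbm 1_K$ and $\Pr[\mathbbm 1_i=0]=q_{\text{tr},i}$, we obtain $\Pr[\gamma>0]=1-\prod_{i=1}^{K}q_{\text{tr},i}=1-G$. No approximation enters, so the value is exact, and the surviving atom $\Pr[\gamma=0]=G$ is the claimed performance limit. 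Independence of the result from the out-of-cluster and extrinsic parameters is immediate, since these affect $\gamma$ only through $I$, which drops out in the limit; and since $1-G$ does not depend on the absolute geometry, the same conclusion survives the geometry averaging, i.e., holds for Theorem 2.

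\emph{Consistency with (\ref{general}).} Setting $\theta=0$: from (\ref{F}), $\mathcal F(0,\eta)=0$ (the prefactor $2t_1$ vanishes and $_2F_1(\cdot,0)=1$), and every exponent in (\ref{delta}) and in $\Psi_u$ carries a factor of $\theta$, so $\Delta_{u,\ell}(0)=1$. Then by (\ref{B1}), $\mathrm B_{u,v}(0)=\sum_{\ell=1}^{v}\binom{v}{\ell}(-1)^{\ell+1}=1-(1-1)^v=1$ for every $v\ge1$. It then remains to show $\sum_{u=1}^{\tau}\sum_{v=1}^{n_u}\sum_{m=0}^{K-1}\bigl(\alpha_m(\hat\Omega)-\alpha_m(\Omega)\bigr)\mathrm A_m(n_u,v)=G^{-1}-1$, so that the right side of (\ref{general}) collapses to $G(G^{-1}-1)=1-G$. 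This identity is precisely the partial-fraction expansion built in Appendix A evaluated at the origin; alternatively, one notes that the single approximation made in Appendix A is an Alzer-type Erlang-tail bound, $\Pr[\mathrm{Erlang}(v)>x]\approx 1-(1-e^{-\kappa x})^{v}$ with $\kappa=(v!)^{-1/v}$, whose two sides both tend to $1$ as $x\to0^+$, so the approximation error vanishes and (\ref{general}) returns the exact value $1-G$.

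\emph{Main obstacle.} The continuity-of-measure step and the Bernoulli count are routine; the only genuine care points are the a.s.\ finiteness of the aggregate interference (handled by $\eta>2$ and Campbell's theorem) and, if one insists on a formula-only derivation rather than the first-principles one, the combinatorial identity that the $\theta$-independent coefficient sum equals $G^{-1}-1$. I expect that identity to be the hardest piece of a formula-only proof, but it is entirely bypassed by computing $\Pr[\gamma>0]$ directly.
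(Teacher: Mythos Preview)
Your proposal is correct. The paper does not supply a proof for this corollary; it simply states the limit as a consequence of (\ref{general}), so the implicit ``paper approach'' is your second paragraph: set $\theta=0$ in (\ref{general}) and observe that $\Delta_{u,\ell}(0)=1$ and $\mathrm B_{u,v}(0)=1$.

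Your first paragraph is a genuinely different and cleaner route. By computing $\Pr[\gamma>0]$ directly from the SINR definition you (i) avoid the combinatorial identity $\sum_{u,v,m}(\alpha_m(\hat\Omega)-\alpha_m(\Omega))\mathrm A_m(n_u,v)=G^{-1}-1$ altogether, (ii) make transparent why the limit is \emph{exact} despite Theorem~1 being an approximation (the Alzer bound is tight at the origin, as you note), and (iii) show immediately why out-of-cluster parameters drop out. The formula-only route buys consistency with the stated expression but requires either the partial-fraction identity or the observation that the approximation error vanishes at $\theta=0$; you handle both. The a.s.\ finiteness of $I$ via Campbell's theorem under $\eta>2$ is the right way to justify that $\{\gamma>0\}$ reduces to an event on the numerator alone.
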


Theorem 1 is general in that it is applicable to any given absolute cluster geometry. Later, numerical results confirm the accuracy of the analytical expression given in Theorem 1. We now consider the case where the in-cluster distances $\{d_i\}_{i=1}^{K}$ are distinct (i.e., $\tau=K$). For this scenario, the following proposition provides a closed-form expression for the exact CCDF of the SINR $\gamma$.  
\begin{proposition}\label{prop1}\normalfont
	For a cluster of size $K$ with a distinct cluster geometry $\left(\{d_i\}_{i=1}^{K}\right)$, the CCDF of $\gamma$, ${\bar{F}}_\gamma\left(K,\theta\right)=\Pr\left[\gamma>\theta\right]$, can be expressed in terms of the intensity parameters ($\Lambda$), noise power ($\sigma^2$), and energy harvesting parameters ($\Xi$) as
	\begin{align}\label{d2}
		{}{\bar{F}}_{\gamma}(K,\theta)&= 
		G\displaystyle \sum\limits_{j=1}^{K}{\left(\frac{\sum\limits_{i=0}^{K-1}{\left({\alpha_i(\hat{\Omega})}-\alpha_i(\Omega)\right)(\omega_j^\eta)^i}}{\omega_j^\eta\left(\prod\limits_{l\neq j}^{K}\omega_l^\eta-\omega_j^\eta\right)}\right){\mathrm{C}_j(\theta)}}   
	\end{align} 
	where
	\begin{align}\label{deltaa}
		\mathrm{C}_j(\theta)={e^{-d_j^\eta \theta \sigma^{2}}} {e^{-\pi p_{\text{tr},o}{\lambda} d_K^2 \mathcal{F}\left({\omega_j^\eta}\theta,\eta\right)}}\mathrm{D}_j\left(M\right)
	\end{align}
	with
	\begin{align}
	\mathrm{D}_j\left(M\right)=\prod\limits_{m=1}^{M}{e^{-\pi p_{\text{tr}}^{(m)}\lambda_m {d_j}^2{\left(\theta P_m\right)}^{\frac{2}{\eta}}\Gamma\left(1+\frac{2}{\eta}\right)\Gamma\left(1-\frac{2}{\eta}\right)}}.
	\end{align}
\end{proposition}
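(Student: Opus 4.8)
\textbf{Proof proposal for Proposition \ref{prop1}.}

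The plan is to compute $\Pr[\gamma > \theta]$ directly by conditioning on the realization of the Bernoulli indicators $\{\mathbbm{1}_i\}_{i=1}^K$ for the in-cluster transmitters, on the cluster geometry (which is fixed here), and on the interference-plus-noise term, then exploiting the Rayleigh fading assumption. First I would write $\Pr[\gamma>\theta] = \Pr\big[\sum_{i=1}^K \mathbbm{1}_i d_i^{-\eta} H_i > \theta(I+\sigma^2)\big]$. Conditioning on which subset $\mathcal{S}\subseteq\{1,\dots,K\}$ of in-cluster TXs is active (this happens with probability $\prod_{i\in\mathcal{S}} p_{\mathrm{tr},i}\prod_{i\notin\mathcal{S}} q_{\mathrm{tr},i}$), the desired signal is a weighted sum of independent $\exp(1)$ variables $\sum_{i\in\mathcal{S}} d_i^{-\eta}H_i$, which (since the distances are distinct, so the weights $d_i^{-\eta}$ are distinct) has a hypoexponential distribution with an explicit CCDF obtainable by partial-fraction expansion. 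The term $G=\prod_{i=1}^K q_{\mathrm{tr},i}$ factoring out in front of \eqref{d2} comes from pulling $q_{\mathrm{tr},i}$ out of every subset probability and re-expressing $p_{\mathrm{tr},i}=q_{\mathrm{tr},i}(\frac{1}{q_{\mathrm{tr},i}})$; the combinatorics of summing the hypoexponential tails over all subsets $\mathcal{S}$ is exactly what produces the difference $\alpha_i(\hat\Omega)-\alpha_i(\Omega)$ of elementary-symmetric-function-type coefficients, where $\hat\Omega$ carries the $1/q_{\mathrm{tr},i}$ factors and $\Omega$ the bare $\omega_i^\eta$, and the empty set $\mathcal{S}=\emptyset$ contributes the ``$-\Omega$'' (i.e.\ $\gamma=0$ trivially fails $\gamma>\theta$). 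I would make this precise by either (a) a single generating-function identity for $\prod_i(1 + x\,\omega_i^{-\eta}(p_{\mathrm{tr},i}/q_{\mathrm{tr},i} \text{ vs } \cdot))$ evaluated via residues, or (b) induction on $K$.

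Next, given the active set, the standard stochastic-geometry step: because each $H_i\sim\exp(1)$, the tail of the hypoexponential signal is a sum of terms $e^{-s_j \theta(I+\sigma^2)}$ over poles $s_j$, so taking the expectation over $I$ reduces to evaluating the Laplace transform $\mathcal{L}_I(s)=\mathbb{E}[e^{-sI}]$ at $s = \omega_j^\eta\theta / d_K^\eta$ (after the $d_K^{-\eta}$ normalization), times the noise factor $e^{-d_j^\eta\theta\sigma^2}$ which is exactly the $e^{-s\sigma^2}$ piece. The Laplace transform factorizes over the independent tiers: $\mathcal{L}_{I_0}$ over the out-of-cluster TXs of $\Phi$ (a PPP on the complement of a disk of radius $d_K$, thinned by $p_{\mathrm{tr},o}$, with $\exp(1)$ marks) gives the $e^{-\pi p_{\mathrm{tr},o}\lambda d_K^2 \mathcal{F}(\omega_j^\eta\theta,\eta)}$ factor via the usual PPP probability generating functional and the hypergeometric integral $\int_{d_K}^\infty (1-\frac{1}{1+s r^{-\eta}})r\,dr$ that defines $\mathcal F$ in \eqref{F}; and $\mathcal{L}_{I_m}$ over each interfering tier $\Phi_m$ (a PPP on all of $\mathbb{R}^2$, thinned by $p_{\mathrm{tr}}^{(m)}$, power $P_m$, $\exp(1)$ marks) gives the $e^{-\pi p_{\mathrm{tr}}^{(m)}\lambda_m d_j^2 (\theta P_m)^{2/\eta}\Gamma(1+\tfrac2\eta)\Gamma(1-\tfrac2\eta)}$ factor via the well-known closed form $\int_0^\infty(1-\frac{1}{1+sr^{-\eta}})r\,dr = \tfrac12 s^{2/\eta}\Gamma(1+\tfrac2\eta)\Gamma(1-\tfrac2\eta)$. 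Assembling these products yields $\mathrm{C}_j(\theta)$ and $\mathrm{D}_j(M)$ as in \eqref{deltaa}. Unlike Theorem \ref{thm1}, there is no Alzer-type bound on an incomplete-Gamma CDF here, because the distinctness of $\{d_i\}$ means the signal CCDF is genuinely a finite exponential sum (simple poles only, no repeated poles requiring the $(v!)^{-1/v}$ approximation), so the result is exact.

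The main obstacle I anticipate is the bookkeeping in the partial-fraction / subset-summation step that collapses $\sum_{\mathcal{S}}$ into the compact coefficient $\frac{\sum_{i=0}^{K-1}(\alpha_i(\hat\Omega)-\alpha_i(\Omega))(\omega_j^\eta)^i}{\omega_j^\eta(\prod_{l\neq j}\omega_l^\eta - \omega_j^\eta)}$ appearing in \eqref{d2}; one must verify that the residue at the pole associated with TX $j$, summed over all active sets containing $j$ and weighted by the subset probabilities, reorganizes into exactly this ratio of an $\alpha$-difference polynomial over the Vandermonde-like denominator $\omega_j^\eta\prod_{l\neq j}(\omega_l^\eta-\omega_j^\eta)$ — here it will be cleanest to first establish the identity $\prod_{i=1}^K(x - \omega_i^{-\eta}) = $ (something in the $\alpha_i$) and then invoke a residue/Lagrange-interpolation formula, or alternatively prove \eqref{d2} by induction on $K$ with the base case $K=1$ being immediate from $\mathcal{L}_I$. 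Everything downstream of that algebraic identity is routine PPP Laplace-functional computation, and can be quoted from the proof of Theorem \ref{thm1} in Appendix A with the incomplete-Gamma/Alzer approximation step deleted.
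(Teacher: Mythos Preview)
Your proposal is correct and follows essentially the same route as the paper: obtain the CCDF of the in-cluster signal $S_K=\sum_{i=1}^K\mathbbm{1}_i H_i\omega_i^{-\eta}$ as a finite linear combination of pure exponentials (valid exactly because the $\omega_i^\eta$ are distinct, so only simple poles appear and no Alzer bound is needed), then push the expectation over $I$ through term-by-term as Laplace transforms, factorize over the independent tiers, and read off $\mathrm{C}_j(\theta)$ and $\mathrm{D}_j(M)$ from the standard PGFL integrals. The only difference is in how the signal CCDF is reached: the paper obtains it by specializing the general formula (A.1) (derived via the characteristic function of $S_K$ and partial fractions, with the combinatorics offloaded to \cite{talhaLet}) to $\tau=K$, $n_u=1$, whereas you propose to condition on the active subset $\mathcal{S}$ and then collapse the subset sum; both routes land on the same intermediate expression (B.1), and the bookkeeping you flag as the main obstacle is precisely what the paper avoids by citing the external reference.
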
 
\begin{proof}
See Appendix B.
\end{proof}
\begin{corollary}\normalfont
$\{q_{\text{tr},i}\}_{i=1}^{K}=q_{\text{tr},o}\triangleq q_{\text{tr}}$. When all the TXs in $\Phi_0$ have identical energy harvesting capabilities, i.e., $q_{\text{tr},i}=q_{\text{tr},o}\triangleq q_{\text{tr}}$, the CCDF in (\ref{d2}) simplifies to
	\begin{align}\label{cor1}
		{\bar{F}}_{\gamma}(K,\theta)= 
		\lefteqn \displaystyle \sum\limits_{j=1}^{K}{\left(\frac{\sum\limits_{i=0}^{K-1}{\alpha_i(\Omega)\left({q_{\text{tr}}}^i-{q_{\text{tr}}}^K\right)(\omega_j^\eta)^i}}{\omega_j^\eta\left(\prod\limits_{l\neq j}^{K}\omega_l^\eta-\omega_j^\eta\right)}\right){\mathrm{C}_j(\theta)}} 
	\end{align}
where $\mathrm{C}_j(\theta)$ is given by (\ref{deltaa}).  
\end{corollary}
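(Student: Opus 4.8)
The plan is to obtain this corollary as a direct specialization of Proposition~\ref{prop1}: I would simply trace how every object in~(\ref{d2}) that carries a $q_{\text{tr},i}$ collapses once we impose $q_{\text{tr},1}=\dots=q_{\text{tr},K}=q_{\text{tr},o}\triangleq q_{\text{tr}}$. In~(\ref{d2}) the parameters $q_{\text{tr},i}$ appear only through the global prefactor $G$ (see~(\ref{QQ})) and through the coefficients $\alpha_i(\hat{\Omega})$ of the set $\hat{\Omega}=\{\omega_1^\eta/q_{\text{tr},1},\dots,\omega_K^\eta/q_{\text{tr},K}\}$; the denominators, the coefficients $\alpha_i(\Omega)$, and the factors $\mathrm{C}_j(\theta)$ of~(\ref{deltaa}) are functions of $\Omega$, $\sigma^2$, and the interference-tier parameters $p_{\text{tr},o},\{p_{\text{tr}}^{(m)}\}_{m=1}^{M}$ only, so under the substitution they are carried over unchanged (with $p_{\text{tr},o}$ now equal to $1-q_{\text{tr}}$).

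The crux is a homogeneity property of the $\alpha_i$. By~(\ref{def4}), $\alpha_i(\Omega)$ is $(-1)^i$ times the sum of all products of $K-i$ of the elements $\omega_1^\eta,\dots,\omega_K^\eta$, i.e.\ $(-1)^i$ times their $(K-i)$-th elementary symmetric function, which is homogeneous of degree $K-i$. When all $q_{\text{tr},i}=q_{\text{tr}}$, the set $\hat{\Omega}$ is exactly $\Omega$ scaled elementwise by $1/q_{\text{tr}}$, so homogeneity yields $\alpha_i(\hat{\Omega})=q_{\text{tr}}^{\,i-K}\,\alpha_i(\Omega)$ for $i=0,1,\dots,K-1$ (the boundary term $\alpha_0(\hat{\Omega})=q_{\text{tr}}^{-K}\prod_{l}\omega_l^\eta$ is consistent with this). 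Since also $G=\prod_{i=1}^{K}q_{\text{tr},i}=q_{\text{tr}}^{K}$, combining these gives
\[
  G\bigl(\alpha_i(\hat{\Omega})-\alpha_i(\Omega)\bigr)
  = q_{\text{tr}}^{K}\bigl(q_{\text{tr}}^{\,i-K}-1\bigr)\,\alpha_i(\Omega)
  = \bigl(q_{\text{tr}}^{\,i}-q_{\text{tr}}^{K}\bigr)\,\alpha_i(\Omega).
\]

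Finally, I would absorb the global prefactor $G$ in~(\ref{d2}) into the inner sum over $i$ and apply the identity above term by term: the bracketed numerator $\sum_{i=0}^{K-1}(\alpha_i(\hat{\Omega})-\alpha_i(\Omega))(\omega_j^\eta)^i$, after multiplication by $G$, becomes $\sum_{i=0}^{K-1}\alpha_i(\Omega)(q_{\text{tr}}^{\,i}-q_{\text{tr}}^{K})(\omega_j^\eta)^i$, which is precisely the numerator appearing in~(\ref{cor1}); since nothing else in~(\ref{d2}) depends on the $q_{\text{tr},i}$, what remains is exactly~(\ref{cor1}). I do not foresee a real obstacle here — the statement is essentially a substitution — and the only point requiring care is the sign/exponent bookkeeping in the homogeneity step, namely getting the exponent $i-K$ (and the $i=0$ boundary) right and double-checking that no $q_{\text{tr},i}$-dependence lurks anywhere else in~(\ref{d2}).
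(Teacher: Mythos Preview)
Your proposal is correct and is precisely the direct specialization the paper intends: the corollary is stated without proof immediately after Proposition~\ref{prop1}, and the only nontrivial step is the homogeneity identity $\alpha_i(\hat{\Omega})=q_{\text{tr}}^{\,i-K}\alpha_i(\Omega)$ you spell out, which combined with $G=q_{\text{tr}}^{K}$ yields $G(\alpha_i(\hat{\Omega})-\alpha_i(\Omega))=(q_{\text{tr}}^{\,i}-q_{\text{tr}}^{K})\alpha_i(\Omega)$.
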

Note that Theorem 1 can be used for analyzing cooperative setups in the presence of interference and noise, for a given cluster geometry. For a network consisting of homogeneous TXs, we next provide a more general result in terms of normalized distances by unconditioning with respect to the distance $d_K$. In other words, the following result does not correspond to a particular cluster, it is rather averaged over all such clusters that share a common $\{\omega_i\}_{i=1}^{K}$.
Also note the use of superscript $\prime$ in ${\bar{F}}^{\prime}_{\gamma}(K,\theta)$ to differentiate it from the earlier notation ${\bar{F}}^{}_{\gamma}(K,\theta)$ used for Theorem 1.
\begin{theorem}\normalfont
In the interference-limited regime ($\sigma^2\rightarrow 0$), the CCDF of $\gamma$, ${\bar{F}}^{\prime}_{\gamma}(K,\theta)$, can be expressed as a function of cluster size $\left(K\right)$, intensity parameters $\left(\Lambda\right)$, and energy harvesting parameters $\left(\Xi\right)$ for a normalized cluster geometry $\left(\{\omega_i\}_{i=1}^{K}\right)$ as 
	\begin{align}\label{d3}
		{\bar{F}}^{\prime}_{\gamma}(K,\theta)&= 
		\lefteqn \displaystyle \sum\limits_{j=1}^{K}{\left(\frac{\sum\limits_{i=0}^{K-1}{\alpha_i(\Omega)\left({q_{\text{tr}}}^i-{q_{\text{tr}}}^K\right)(\omega_j^\eta)^i}}{\omega_j^\eta\left(\prod\limits_{l\neq j}^{K}\omega_l^\eta-\omega_j^\eta\right)}\right)\mathcal{V}_j\left(K,\theta\right)} 
	\end{align}
where 
\begin{align}\label{varpi}
\mathcal{V}_j\left(K,\theta\right)={\left(1+\mathcal{F}\left(\omega_j^\eta \theta,\eta\right)+ \varUpsilon_j\left(M\right)\right)}^{-K} 
\end{align}
and
\begin{align}\label{tier}
\varUpsilon_j\left(M\right)={\omega_j}^2\theta^{\frac{2}{\eta}}\Gamma\left(1+\frac{2}{\eta}\right)\Gamma\left(1-\frac{2}{\eta}\right)\sum\limits_{m=1}^{M}{\tilde{p}_{\text{tr}}^{(m)}\tilde{\lambda}_m {P_m}^{\frac{2}{\eta}}}. 
\end{align}
Note that $\mathcal{F}(\cdot,\cdot)$ follows from (\ref{F}), and we define $\tilde{p}_{\text{tr}}^{(m)}=\frac{{p}_{\text{tr}}^{(m)}}{{p}_{\text{tr}}}$ and $\tilde{\lambda}_m=\frac{{\lambda}_m}{\lambda}$.	
\end{theorem}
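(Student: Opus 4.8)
The plan is to derive (\ref{d3}) by unconditioning the distinct-geometry CCDF with respect to the overall scale $d_K$ of the cluster, in the homogeneous-TX special case. I would start from the specialization of Proposition~\ref{prop1} to $q_{\text{tr},i}=q_{\text{tr},o}\triangleq q_{\text{tr}}$, i.e.\ (\ref{cor1}): conditioned on the full geometry $\{d_i\}_{i=1}^{K}=\{d_K\omega_i\}_{i=1}^{K}$ it reads $\bar F_\gamma(K,\theta)=\sum_{j=1}^{K}c_j\,\mathrm{C}_j(\theta)$ with $c_j=\big(\sum_{i=0}^{K-1}\alpha_i(\Omega)(q_{\text{tr}}^i-q_{\text{tr}}^K)(\omega_j^\eta)^i\big)\big/\big(\omega_j^\eta(\prod_{l\neq j}^{K}\omega_l^\eta-\omega_j^\eta)\big)$. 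The coefficients $c_j$ depend only on the \emph{normalized} geometry $\Omega$ and on $q_{\text{tr}}$, not on $d_K$, so the expectation over $d_K$ acts on $\mathrm{C}_j(\theta)$ alone and passes through the outer sum by linearity.

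I would then carry out the following steps. (i) Set $\sigma^2\to0$, which removes the noise factor $e^{-d_j^\eta\theta\sigma^2}$ from $\mathrm{C}_j(\theta)$ in (\ref{deltaa}), leaving $\mathrm{C}_j(\theta)=e^{-\pi p_{\text{tr},o}\lambda d_K^2\mathcal F(\omega_j^\eta\theta,\eta)}\,\mathrm{D}_j(M)$. (ii) Substitute $d_j^2=\omega_j^2 d_K^2$ into $\mathrm{D}_j(M)$ and use $p_{\text{tr},o}=p_{\text{tr}}$, so that \emph{all} of the remaining $d_K$-dependence collapses into one Gaussian-type exponent, $\mathrm{C}_j(\theta)\big|_{\sigma^2\to0}=\exp(-\pi d_K^2\,\mathcal E_j)$ with $\mathcal E_j=\lambda p_{\text{tr}}\mathcal F(\omega_j^\eta\theta,\eta)+\omega_j^2\theta^{2/\eta}\Gamma(1+\tfrac2\eta)\Gamma(1-\tfrac2\eta)\sum_{m=1}^{M}p_{\text{tr}}^{(m)}\lambda_m P_m^{2/\eta}$. (iii) Invoke the law of $d_K$: by the scaling invariance of a homogeneous PPP the ``shape'' $\{\omega_i\}$ and the ``scale'' $d_K$ are independent, so conditioning on $\{\omega_i\}$ leaves $d_K$ distributed as the distance to the $K$-th nearest transmitting TX, with density $f_{d_K}(r)=\frac{2(\pi\lambda p_{\text{tr}})^{K}}{(K-1)!}r^{2K-1}e^{-\pi\lambda p_{\text{tr}}r^2}$. (iv) Compute $\mathbb E_{d_K}[e^{-\pi d_K^2\mathcal E_j}]$: after the change of variable $t=\pi\lambda p_{\text{tr}}r^2$ this is the Gamma integral $\frac{1}{(K-1)!}\int_0^\infty t^{K-1}e^{-t(1+\mathcal E_j/(\lambda p_{\text{tr}}))}\,dt=(1+\mathcal E_j/(\lambda p_{\text{tr}}))^{-K}$. (v) Simplify $\mathcal E_j/(\lambda p_{\text{tr}})$: the intrinsic part gives $\mathcal F(\omega_j^\eta\theta,\eta)$ and the tier sum gives $\omega_j^2\theta^{2/\eta}\Gamma(1+\tfrac2\eta)\Gamma(1-\tfrac2\eta)\sum_m(p_{\text{tr}}^{(m)}/p_{\text{tr}})(\lambda_m/\lambda)P_m^{2/\eta}=\varUpsilon_j(M)$, so that $\mathbb E_{d_K}[\mathrm{C}_j(\theta)]=(1+\mathcal F(\omega_j^\eta\theta,\eta)+\varUpsilon_j(M))^{-K}=\mathcal V_j(K,\theta)$. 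Substituting back yields (\ref{d3}).

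I expect the main obstacle to be the bookkeeping in steps (ii) and (v): one must verify that \emph{every} occurrence of $d_K$ in $\mathrm{C}_j(\theta)$ --- in particular the $d_j=\omega_j d_K$ hidden inside $\mathrm{D}_j(M)$, which reintroduces the shape variable --- is funnelled into the single quadratic exponent $\pi d_K^2\mathcal E_j$, and then track all the constants (the powers of $\pi$, $\lambda$, $p_{\text{tr}}$, and the two $\Gamma$-factors) through the Gamma integral so that the normalization collapses exactly to $\mathcal V_j(K,\theta)$ with the scaled quantities $\tilde p_{\text{tr}}^{(m)}=p_{\text{tr}}^{(m)}/p_{\text{tr}}$, $\tilde\lambda_m=\lambda_m/\lambda$. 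A minor additional point worth a line is the shape/scale independence used in step (iii), which follows by mapping $\Phi$ to a unit-intensity process via $r\mapsto\sqrt{\pi\lambda p_{\text{tr}}}\,r$, under which the ordered squared distances become the epochs of a unit-rate one-dimensional Poisson process while the ratios $\omega_i$ are scale-free.
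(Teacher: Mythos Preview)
Your proposal is correct and follows essentially the same route as the paper's Appendix~C: start from (\ref{cor1}) with $\sigma^2\to0$, note the coefficients depend only on $\{\omega_i\}$, and average $\mathrm{C}_j(\theta)$ over $d_K$ using the generalized Gamma density $f_{d_K}(r)=\frac{2}{r\Gamma(K)}(p_{\text{tr}}\lambda\pi r^2)^Ke^{-p_{\text{tr}}\lambda\pi r^2}$, reducing the integral to a Gamma integral via a change of variable. Your treatment of the shape/scale independence in step~(iii) is in fact more careful than the paper, which simply asserts the marginal law of $d_K$ without discussing whether conditioning on the ratios $\{\omega_i\}$ alters it.
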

\begin{proof}
See Appendix C.  
\end{proof}
\begin{remarks}\normalfont
For the case with no extrinsic interference, i.e., with the $M$ interfering tiers turned off, the CCDF expression in Theorem 2 is independent of the TX intensity $\lambda$. This is because the probability of finding the closest TX around the receiver increases with $\lambda$, but so does the interference such that the two effects cancel out.
\end{remarks}
\begin{remarks}\normalfont
With the $M$ interfering tiers now turned on, the CCDF expression in Theorem 2 is no longer independent of the intensity parameters $\Lambda$. In this case, increasing the TX intensity $\lambda$ (or more generally the effective intensity $p_{\text{tr}}\lambda$) helps dilute the intensity of the interfering tiers. This is supported by (\ref{tier})
where the term inside the summation vanishes as $\lambda$ is increased. This neutralizes the harmful term $\varUpsilon_j\left(M\right)$, which captures the effect of extrinsic interference. This is in contrast to the previous case where the TX intensity $\lambda$ plays no role.
\end{remarks}
The following corollaries have been obtained assuming the $M$ interfering tiers to be turned off.
\begin{corollary}\normalfont
$p_{\text{tr}}\rightarrow 1$. It is worth noting that without energy harvesting and a random medium access protocol, i.e., as $p_{\text{tr}}\rightarrow1$ in (\ref{d3}), and further assuming the $M$ interfering tiers to be turned off, we can retrieve the expression for the CCDF of $\gamma$ in a traditionally powered cooperative network as given in \cite{lee2013base}, which Theorem 2 generalizes.
\end{corollary}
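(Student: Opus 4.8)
The plan is to obtain the stated reduction by applying two independent simplifications to the Theorem~2 expression (\ref{d3}) and then identifying the outcome with the CCDF of \cite{lee2013base}. These two simplifications act on disjoint parts of (\ref{d3}): the number of tiers $M$ enters only through $\mathcal{V}_j$ via the term $\varUpsilon_j(M)$, whereas the common transmission probability $q_{\text{tr}}=1-p_{\text{tr}}$ enters only through the rational prefactor. I would therefore handle them one at a time. Physically, $p_{\text{tr}}\to1$ corresponds to dropping both the energy-harvesting burstiness ($\rho\to1$) and the random access ($p_{\text{ch}}\to1$), consistent with Lemma~\ref{lem1} and the remark $\lim_{\rho\to1}p_{\text{tr}}=p_{\text{ch}}$, so that every in-cluster TX transmits almost surely.

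First I would switch off the interfering tiers by setting $M=0$. This makes the summation $\sum_{m=1}^{M}$ in (\ref{tier}) empty, so $\varUpsilon_j(0)=0$, and (\ref{varpi}) collapses to $\mathcal{V}_j(K,\theta)=(1+\mathcal{F}(\omega_j^\eta\theta,\eta))^{-K}$ with $\mathcal{F}$ as in (\ref{F}). All dependence on $\{\lambda_m,P_m,p_{\text{tr}}^{(m)}\}_{m=1}^{M}$ disappears, leaving only the intrinsic-interference contribution inside the parentheses.

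Next I would take $q_{\text{tr}}\to0$ in the numerator of the prefactor in (\ref{d3}), namely $\sum_{i=0}^{K-1}\alpha_i(\Omega)(q_{\text{tr}}^i-q_{\text{tr}}^K)(\omega_j^\eta)^i$. (Here (\ref{d3}) already carries the homogeneous specialization, with the prefactor $G=\prod_{i=1}^{K}q_{\text{tr},i}$ from (\ref{QQ}) folded into the bracket $q_{\text{tr}}^i-q_{\text{tr}}^K$, so the limit is nontrivial and there is no separate vanishing factor.) The key observation is that the bracket $q_{\text{tr}}^i-q_{\text{tr}}^K$ vanishes for every $1\le i\le K-1$ (both powers tend to zero), while for $i=0$ it tends to $1-0=1$ since $K\ge1$. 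Hence only the $i=0$ term survives and the numerator limits to $\alpha_0(\Omega)$. By the definition (\ref{def4}), $\alpha_0(\Omega)={\overset{+}{\sum}}\left[{K\choose K}_\Omega\right]=\prod_{l=1}^{K}\omega_l^\eta$, the single product of all $K$ elements of $\Omega$. The surviving weight for each $j$ is therefore $\prod_{l=1}^{K}\omega_l^\eta$ divided by $\omega_j^\eta\left(\prod_{l\ne j}^{K}\omega_l^\eta-\omega_j^\eta\right)$, i.e.\ exactly the partial-fraction residue weights for a fully active ($p_{\text{tr}}=1$) cluster, and (\ref{d3}) reduces to $\sum_{j=1}^{K}$ of these weights each multiplied by $(1+\mathcal{F}(\omega_j^\eta\theta,\eta))^{-K}$.

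The remaining step, which I expect to be the main obstacle, is to verify that this limiting expression coincides term by term with the CCDF derived for a traditionally powered non-coherent cooperative network in \cite{lee2013base}. This is a matter of matching representations rather than of further computation: I would check that the aggregate intrinsic-interference Laplace-transform factor appearing there, obtained under Rayleigh fading, is precisely the Gauss-hypergeometric form $\mathcal{F}(\cdot,\eta)$ of (\ref{F}), and that the distance-ratio weights in \cite{lee2013base} agree with the residue weights derived above. Since no fading averaging or harvesting randomness remains once $q_{\text{tr}}\to0$, the two models describe the same physical system, so the identification should go through and establish that Theorem~2 generalizes \cite{lee2013base}.
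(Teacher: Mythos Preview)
Your proposal is correct and follows the natural specialization the paper intends; the paper itself provides no explicit proof for this corollary, and your two-step reduction (set $M=0$ so $\varUpsilon_j=0$, then let $q_{\text{tr}}\to0$ so only the $i=0$ term $\alpha_0(\Omega)=\prod_{l=1}^{K}\omega_l^\eta$ survives) is exactly the computation implicit in the statement. The final identification with \cite{lee2013base} is indeed just a notational match, since both derivations rest on the same PPP Laplace functional and generalized-Gamma averaging over $d_K$.
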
 
\begin{corollary}\normalfont
\textit{$K=1$}. For the non-cooperative case, the expression in (\ref{d3}) simplifies to ${\bar{F}}_{\gamma}^{\prime}(1,\theta)=(1-q_{\text{tr}}){\left(1+\mathcal{F}(\theta,\eta)\right)}^{-1}$. 
As $p_{\text{tr}}\rightarrow 0$,  ${\bar{F}}_{\gamma}^{\prime}(1,\theta)\rightarrow 0$, which shows that the energy harvesting parameters are critical in determining outage.
Furthermore, with $q_\text{tr}=0$, we can retrieve the CCDF expression for the signal-to-interference ratio (SIR) in a traditionally powered non-cooperative network as given in \cite{andrews2011tractable}.
\end{corollary}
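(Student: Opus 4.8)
The plan is to obtain the claimed closed form by directly specializing the general expression (\ref{d3}) to $K=1$ and then reading off the two limiting regimes, exactly as the corollary asserts. The subtlety to settle first is that several objects in (\ref{d3}) degenerate at $K=1$ — the normalized distances $\{\omega_i\}$, the coefficient $\alpha_0(\Omega)$, and the product appearing in the denominator — so I would fix the empty-product and single-index conventions up front to avoid a spurious $0/0$. For $K=1$ the only serving TX is the nearest one, so $d_1=d_K$ and $\omega_1=d_1/d_K=1$, whence $\omega_1^\eta=1$; the outer sum in (\ref{d3}) collapses to the single term $j=1$, and the inner sum over $i$ contains only $i=0$.

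Next I would evaluate each factor. By the definition (\ref{def4}), $\alpha_0(\Omega)={\overset{+}{\sum}}\left[{1\choose 1}_\Omega\right]=\omega_1^\eta=1$. The energy-harvesting weight is $q_{\text{tr}}^{0}-q_{\text{tr}}^{1}=1-q_{\text{tr}}=p_{\text{tr}}$, and $(\omega_1^\eta)^{0}=1$, so the numerator reduces to $1-q_{\text{tr}}$. In the denominator the product $\prod_{l\neq 1}^{1}(\omega_l^\eta-\omega_1^\eta)$ is empty and hence equals $1$, leaving $\omega_1^\eta\cdot 1=1$; this is the step where reading the denominator as a product of pairwise differences (rather than a difference of products) is essential, since otherwise one would be misled into the degenerate $1\cdot(1-1)=0$. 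Finally, with the $M$ interfering tiers turned off we have $\varUpsilon_1(M)=0$ by (\ref{tier}), so $\mathcal{V}_1(1,\theta)=\left(1+\mathcal{F}(\omega_1^\eta\theta,\eta)\right)^{-1}=\left(1+\mathcal{F}(\theta,\eta)\right)^{-1}$ from (\ref{varpi}). Assembling these gives ${\bar{F}}_\gamma^{\prime}(1,\theta)=(1-q_{\text{tr}})\left(1+\mathcal{F}(\theta,\eta)\right)^{-1}$, as claimed.

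For the asymptotic statement I would note that $1-q_{\text{tr}}=p_{\text{tr}}$ and that $\mathcal{F}(\theta,\eta)$ in (\ref{F}) is finite and nonnegative for admissible $\theta>0$ and $\eta>2$ (the latter being required already for $\Gamma(1-\tfrac{2}{\eta})$ to exist); hence ${\bar{F}}_\gamma^{\prime}(1,\theta)=p_{\text{tr}}\left(1+\mathcal{F}(\theta,\eta)\right)^{-1}\to 0$ as $p_{\text{tr}}\to 0$, so the link is almost surely in outage when the buffer is starved. For the reduction to the traditionally powered network, setting $q_{\text{tr}}=0$ (equivalently $p_{\text{tr}}\to 1$) yields ${\bar{F}}_\gamma^{\prime}(1,\theta)=\left(1+\mathcal{F}(\theta,\eta)\right)^{-1}$, and it then remains to identify $\mathcal{F}(\theta,\eta)$ with the standard coverage kernel of \cite{andrews2011tractable}: I would show that the Gauss hypergeometric expression in (\ref{F}) equals $\theta^{2/\eta}\int_{\theta^{-2/\eta}}^{\infty}(1+u^{\eta/2})^{-1}\,du$, so that $\left(1+\mathcal{F}(\theta,\eta)\right)^{-1}$ coincides with their SIR-coverage probability.

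The bulk of the argument is mechanical, and the two places needing care are (i) the convention handling in the $K=1$ collapse, where the empty product must be taken as unity and the denominator read as a product of differences so that the $j=1$ term is well defined; and (ii) the final special-function identity linking (\ref{F}) to the integral representation in \cite{andrews2011tractable}. The latter is the main obstacle in the sense of requiring a genuine manipulation — recasting the ${}_2F_1$ via an Euler-type integral representation, or via a substitution $u\mapsto$ (power of $u$) reducing it to the incomplete-Beta or arctangent forms used in the cited work — but it is a known identity and introduces no new ideas.
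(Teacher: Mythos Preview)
Your proposal is correct and follows exactly the approach implicit in the paper, namely a direct specialization of (\ref{d3}) to $K=1$ with the $M$ interfering tiers switched off; the paper provides no separate proof for this corollary beyond stating it as an immediate consequence of Theorem~2. Your careful handling of the empty product in the denominator and the identification of $\mathcal{F}(\theta,\eta)$ with the coverage kernel of \cite{andrews2011tractable} fill in details the paper leaves tacit.
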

\begin{corollary}\normalfont
{$S\rightarrow\infty$}. As the energy buffer size $S$ goes to infinity, the transmission probability $p_\textrm{tr}$ approaches $\min\left(\rho,p_\textrm{ch}\right)$ \cite{huang2013Spatial}. Plugging $q_\textrm{tr}=1-\min\left(\rho,p_\textrm{ch}\right)$ in (\ref{d3}) yields the outage probability floor as the buffer size goes to infinity for a given outage threshold $\theta$ and cluster size $K$. For example, for $\rho<p_\textrm{ch}$, the outage probability floor is given by
\begin{align}\label{floor1}
		 &{P}^{\prime}_{\textrm{out},S\rightarrow\infty}(K,\theta)= \nonumber \\ 
		& 1-\lefteqn \displaystyle \sum\limits_{j=1}^{K}{\left(\frac{\sum\limits_{i=0}^{K-1}{\alpha_i(\Omega)\left({(1-\rho)}^i-{(1-\rho)}^K\right)(\omega_j^\eta)^i}}{\omega_j^\eta\left(\prod\limits_{l\neq j}^{K}\omega_l^\eta-\omega_j^\eta\right)}\right)\mathcal{V}_j\left(K,\theta\right)}
\end{align}     
where $\mathcal{V}_j\left(K,\theta\right)$ is as given in (\ref{varpi}). Furthermore, as $\theta\rightarrow 0$, $\lim\limits_{\theta\rightarrow 0}{P}^{\prime}_{\textrm{out},S\rightarrow\infty}(K,\theta)={\left(1-\rho\right)}^K$. This defines the minimum possible outage probability floor for a given energy harvesting rate $\rho$ and cluster size $K$ in the large energy buffer regime. 
\end{corollary}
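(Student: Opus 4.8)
The plan is to obtain the floor by feeding the large-buffer limit of the transmission probability into the homogeneous CCDF of Theorem~2, so the only real work is computing that limit and justifying the substitution. First I would evaluate $\lim_{S\to\infty}p_{\text{tr}}=p_{\text{ch}}\lim_{S\to\infty}p_S$ directly from the closed form in Lemma~\ref{lem1}. Setting $r\triangleq\frac{\rho(1-p_{\text{ch}})}{p_{\text{ch}}(1-\rho)}$ and assuming $\rho<p_{\text{ch}}$ with $\rho,p_{\text{ch}}\in(0,1)$, both $\rho/p_{\text{ch}}$ and $(1-p_{\text{ch}})/(1-\rho)$ lie in $(0,1)$, hence $r\in(0,1)$ and $r^{S}\to0$; the $S>1,\,\rho\neq p_{\text{ch}}$ branch of Lemma~\ref{lem1} then gives $p_S\to\rho/p_{\text{ch}}$, i.e. $p_{\text{tr}}\to\rho$. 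For completeness I would also check the remaining cases: for $\rho>p_{\text{ch}}$ one has $r>1$, so dividing numerator and denominator of that same branch by $r^S$ gives $p_S\to1$; and for $\rho=p_{\text{ch}}$ the third branch is $p_S=\frac{S}{S+1-\rho}\to1$. In every case $\lim_{S\to\infty}p_{\text{tr}}=\min(\rho,p_{\text{ch}})$, consistent with \cite{huang2013Spatial}, so $\lim_{S\to\infty}q_{\text{tr}}=1-\min(\rho,p_{\text{ch}})$.

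Next, I would use that Theorem~2 (with the $M$ interfering tiers turned off, as in these corollaries) describes a \emph{homogeneous} TX tier, so every in-cluster and out-of-cluster TX shares the common value $q_{\text{tr}}$, and for fixed $\theta$ the right-hand side of (\ref{d3}) is a polynomial in $q_{\text{tr}}$, hence continuous. It therefore suffices to plug the limiting value $q_{\text{tr}}=1-\min(\rho,p_{\text{ch}})$ into (\ref{d3}) and set $P^{\prime}_{\textrm{out},S\to\infty}(K,\theta)=1-{\bar{F}}^{\prime}_{\gamma}(K,\theta)$. For $\rho<p_{\text{ch}}$ this turns each factor $q_{\text{tr}}^{i}-q_{\text{tr}}^{K}$ into $(1-\rho)^{i}-(1-\rho)^{K}$, which is exactly (\ref{floor1}); for $\rho\geq p_{\text{ch}}$ the same argument produces the analogous expression with $1-\rho$ replaced by $1-p_{\text{ch}}$, so the floor claim holds in general.

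For the $\theta\to0$ statement I would invoke the $\theta\to0$ corollary already established for Theorem~2, namely $\lim_{\theta\to0}{\bar{F}}^{\prime}_{\gamma}(K,\theta)=1-G$ with $G=\prod_{i=1}^{K}q_{\text{tr},i}$. In the homogeneous large-buffer regime with $\rho<p_{\text{ch}}$ we have $q_{\text{tr},i}\equiv1-\rho$, so $G=(1-\rho)^{K}$ and $\lim_{\theta\to0}P^{\prime}_{\textrm{out},S\to\infty}(K,\theta)=1-(1-G)=(1-\rho)^{K}$; alternatively one checks directly that $\mathcal{V}_j(K,\theta)\to1$ as $\theta\to0$ (since $\mathcal{F}(\omega_j^\eta\theta,\eta)\to0$) and that the remaining $q_{\text{tr}}$-dependent sum collapses to $1-q_{\text{tr}}^{K}$. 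The only genuinely delicate point is the first step: showing that the three-branch formula of Lemma~\ref{lem1} reduces, regardless of the sign of $\rho-p_{\text{ch}}$, to the single value $\min(\rho,p_{\text{ch}})$; everything after that is substitution, continuity in $q_{\text{tr}}$, and an appeal to the already-proved $\theta\to0$ corollary.
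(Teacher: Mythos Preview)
Your proposal is correct and follows essentially the same approach as the paper: the corollary is obtained by substituting the limiting value $q_{\text{tr}}=1-\min(\rho,p_{\text{ch}})$ into Theorem~2 and then invoking the $\theta\to0$ behavior (Corollary~1). The paper simply cites \cite{huang2013Spatial} for $\lim_{S\to\infty}p_{\text{tr}}=\min(\rho,p_{\text{ch}})$, whereas you derive it explicitly from the three branches of Lemma~\ref{lem1}; your added continuity remark (that with the interfering tiers off, $\mathcal{V}_j$ is independent of $q_{\text{tr}}$ so the right side of (\ref{d3}) is polynomial in $q_{\text{tr}}$) is a nice justification the paper leaves implicit.
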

\begin{corollary}\normalfont
{$\rho>p_\textrm{ch}$}. As the energy harvesting rate $\rho$ exceeds the channel access probability $p_\textrm{ch}$, the transmission probability $p_\textrm{tr}\rightarrow p_\textrm{ch}$ assuming a sufficiently large energy buffer, i.e., as $S\rightarrow\infty$. In this regime, the network is independent of the energy harvesting parameters and behaves like a traditionally-powered network. The outage probability expression for this regime can be obtained similar to Corollary 5. 
\end{corollary}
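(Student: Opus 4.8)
The plan is to reduce Corollary 6 to a single parameter substitution in Theorem 2, via a limiting argument on the transmission probability supplied by Lemma~\ref{lem1} (and, as with Corollary 5, the $M$ interfering tiers are taken to be switched off).

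First I would pin down the limiting value of $p_\textrm{tr}$ when $\rho>p_\textrm{ch}$ and $S\to\infty$. Starting from the $S>1$, $\rho\neq p_\textrm{ch}$ branch of Lemma~\ref{lem1}, write $r\triangleq\frac{\rho(1-p_\textrm{ch})}{p_\textrm{ch}(1-\rho)}$ and note that $\rho>p_\textrm{ch}$ makes both $\frac{\rho}{p_\textrm{ch}}>1$ and $\frac{1-p_\textrm{ch}}{1-\rho}>1$, so $r>1$. Dividing the numerator and denominator of $p_S$ by $r^S$ and letting $S\to\infty$ (so that $r^{-S}\to 0$) collapses the ratio to $p_S\to 1$, hence $p_\textrm{tr}=p_\textrm{ch}\,p_S\to p_\textrm{ch}$ and $q_\textrm{tr}\to 1-p_\textrm{ch}$. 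This is exactly the informal $p_\textrm{tr}\to\min(\rho,p_\textrm{ch})=p_\textrm{ch}$ statement of the corollary; verifying $r>1$ is precisely what prevents the $S\to\infty$ limit from settling at the interior value $\rho/p_\textrm{ch}$, which exceeds $1$ and thus cannot be a probability.

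Second, I would observe that a traditionally powered node --- formally the $\rho=1$ case of Lemma~\ref{lem1} --- running the same slotted ALOHA protocol has $p_S=1$ and $p_\textrm{tr}=p_\textrm{ch}$ exactly, i.e., the same value attained in the limit above. With the interfering tiers off, the energy-harvesting parameters $(\rho,S)$ enter the CCDF of Theorem 2 (equation~(\ref{d3})) only through $q_\textrm{tr}$, since $\mathcal{V}_j(K,\theta)$ then reduces to ${(1+\mathcal{F}(\omega_j^\eta\theta,\eta))}^{-K}$, which is free of $q_\textrm{tr}$. Hence substituting $q_\textrm{tr}=1-p_\textrm{ch}$ into~(\ref{d3}) produces an expression with no residual dependence on $\rho$ or $S$, which establishes both claims: the network is independent of the energy-harvesting parameters and coincides with the traditionally powered network of medium-access probability $p_\textrm{ch}$.

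Finally, for the explicit outage floor I would perform the same substitution $q_\textrm{tr}=1-p_\textrm{ch}$ (equivalently $1-\rho\mapsto 1-p_\textrm{ch}$) in Corollary~5's expression~(\ref{floor1}), obtaining
\begin{align}
&P^{\prime}_{\textrm{out},S\rightarrow\infty}(K,\theta)= \nonumber\\
&1-\sum\limits_{j=1}^{K}{\left(\frac{\sum\limits_{i=0}^{K-1}{\alpha_i(\Omega)\left({(1-p_\textrm{ch})}^i-{(1-p_\textrm{ch})}^K\right)(\omega_j^\eta)^i}}{\omega_j^\eta\left(\prod\limits_{l\neq j}^{K}\omega_l^\eta-\omega_j^\eta\right)}\right)\mathcal{V}_j\left(K,\theta\right)},
\end{align}
and then letting $\theta\to 0$ (using $\lim\limits_{\theta\to 0}\bar{F}_{\gamma}=1-G$ from Corollary~1) to read off the floor $(1-p_\textrm{ch})^K$. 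The only genuinely delicate step is the first one --- verifying $r>1$ so that the $S\to\infty$ limit of Lemma~\ref{lem1} resolves cleanly; everything after that is a direct substitution into results already established.
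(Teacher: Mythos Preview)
Your proposal is correct and follows essentially the same route the paper intends: the paper's Corollary~6 simply invokes the large-buffer limit $p_\textrm{tr}\to\min(\rho,p_\textrm{ch})=p_\textrm{ch}$ (already cited in Corollary~5 from \cite{huang2013Spatial}) and then points to Corollary~5 for the substitution into Theorem~2. Your explicit verification that $r>1$ forces $p_S\to 1$ via Lemma~\ref{lem1} is a self-contained replacement for that external citation, but the overall argument---limit of $p_\textrm{tr}$, then plug $q_\textrm{tr}=1-p_\textrm{ch}$ into~(\ref{d3}) and~(\ref{floor1})---is the same.
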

These results shows that our analytical framework is fairly general with the traditionally-powered cooperative and non-cooperative networks as special cases. 
Recall that this subsection characterized the link success probability assuming the receiver to be connected. To pave the way for the overall performance metric, the next subsection characterizes the cluster access probability at a typical receiver.  
\subsection{Cluster Access Probability}\label{secClus}
We define the cluster access probability $p_\textrm{clus}$ as the probability that a random user can access the desired cluster, i.e., it is selected for service in a cluster consisting of its $K$ closest transmitters. 
Since the exact analytical characterization seems challenging, we propose simple closed-form expressions to approximate $p_\textrm{clus}$. 
To illustrate the point, we consider the simple non-cooperative case $K=1$ where users connect to the closest transmitter. A typical user will be selected for service with a probability $1/n$ given that there are $n$ candidate users for the desired transmitter. Therefore, $p_\textrm{clus}=\sum\limits_{n=1}^{\infty}\frac{1}{n}\Pr\left[n\,\textrm{candidate users}\right]$. Note that for $K=1$, $\Pr\left[n\,\textrm{candidate users}\right]$ corresponds to the probability that there are $n$ users within the typical cell. Leveraging the results in \cite{Ferenc2007Voronoi,offload2013} for the probability distribution of the number of users in a cell, we get the following analytical expression 
\allowdisplaybreaks
\begin{align}\label{eq:sarab}
p_\textrm{clus}\approx\sum_{i=1}^{\infty}\frac{3.5^{3.5}}{i!}\frac{\Gamma(i+3.5)}{\Gamma(3.5)} \left(\frac{1}{\beta}\right)^{i-1} \left(3.5+\frac{1}{\beta}\right)^{-(i+3.5)}
\end{align} 
where $\beta=\frac{p_\textrm{tr}\lambda}{\lambda_u}$ (note that $p_\textrm{tr}\lambda$ is the effective transmitter density) is the density ratio. Note that this is an approximation since the area distribution of the Voronoi cell is an approximation\cite{Ferenc2007Voronoi}.
While this approach results in an analytical expression for $p_\textrm{clus}$ for $K=1$, the extension of this formulation to the cooperative case $K>1$ is rather challenging. 
\subsubsection*{Proposed Approximation}
We propose the following analytical approximation for $p_\textrm{clus}$. For a typical user, let $d_K$ be the distance to the $K$th closest transmitter. Let $x_1$ be the distance to its closest user. We can interpret $p_\textrm{clus}$ as the probability that there is no other user within a radius $c\times d_K$ of the typical user, i.e., $p_\textrm{clus}=\Pr\left[x_1>cd_K\right]$. The constant $c>0$ controls the radius of this guard zone. Consider
\begin{align}\label{eq:approx}
\mathbb{E}\left[\Pr\left[x_1>cd_K|d_K\right]\right]\overset{(a)}{=}\mathbb{E}[e^{-\lambda_u\pi c^2 d_K^2}] 
\overset{(b)}{=}{{\left[1+c^2\frac{\lambda_u}{p_\textrm{tr}\lambda}\right]}^{-K}}
\end{align}
where (a) follows by calculating the void probability in a ball of radius $cd_K$, while (b) is obtained by averaging over $d_K$, which follows a generalized Gamma distribution\cite{haenggi2005distance}. 
We propose setting $c=\sqrt{\frac{C_1(K)}{1+C_2(K)\frac{\lambda_u}{p_\textrm{tr}\lambda}}}$ in (\ref{eq:approx}), where $C_1(K)$ and $C_2(K)$ are functions of $K$. This results in the following analytical approximation
\begin{align}\label{eq:clus2}
p_\textrm{clus}(K,\beta)\approx\frac{1}{\left[1+\frac{C_1(K)}{\beta+C_2(K)}\right]^K}.
\end{align} 
Note that $p_\textrm{clus}$ is a function of the cluster size $K$ and the density ratio $\beta$. Using basic curve fitting tools, we found that the linear expressions $C_1(K)=0.06K+0.78$ and $C_2(K)=0.34K-0.49$ (for $K>1$) result in a good fit for the simulation-based $p_\textrm{clus}$ for the considered values of $K$ (see Fig. \ref{fig:clus1}). For the non-cooperative case $K=1$, $C_1(K)=0.725$ and $C_2(K)=0$ give a nice fit. Also, the proposed expression $\left[1+\frac{0.725}{\beta}\right]^{-1}$ is much simpler than the analytical approximation in (\ref{eq:sarab}) for $K=1$. The proposed approximation for $p_\textrm{clus}$ is validated in Fig. \ref{fig:clus1}. Note that the cluster access probability increases with the density ratio and decreases with the cluster size. This is because for a given cluster size, a higher density ratio suggests that the typical user requires a relatively smaller user-free guard zone, which increases the cluster access probability. 
For a given density ratio, a larger cluster size causes more receivers to compete for the same cluster, reducing the cluster access probability. 

We next introduce a performance metric that captures the combined effect of the link success probability as well as the cluster access probability.
\subsection{Overall Success Probability}\label{secSuc}
We define the overall success probability $P_\textrm{suc}(\cdot)$ as the joint probability that a user is selected for service which results in a successful packet reception, i.e., $P_\textrm{suc}(K,\theta)=\Pr\left[\gamma>\theta,\frac{x_1}{d_K}>c\right]$, where we have used the notation introduced in Section \ref{secClus}. Leveraging the results in Section \ref{secLink} and \ref{secClus}, we provide a closed-form expression for $P_\textrm{suc}(\cdot)$ in terms of the model parameters.
\begin{theorem}\normalfont
	The overall success probability $P_\textrm{suc}(K,\theta)$ can be expressed as a function of the cluster size $\left(K\right)$, the intensity parameters $\left(\Lambda,\lambda_u\right)$, and the energy harvesting parameters $\left(\Xi\right)$ for a normalized cluster geometry $\left(\{\omega_i\}_{i=1}^{K}\right)$ as 
	\begin{align}\label{eq:thm3}
	& P_\textrm{suc}(K,\theta)\approx  
	\lefteqn \displaystyle \sum\limits_{j=1}^{K}{\left(\frac{\sum\limits_{i=0}^{K-1}{\alpha_i(\Omega)\left({q_{\text{tr}}}^i-{q_{\text{tr}}}^K\right)(\omega_j^\eta)^i}}{\omega_j^\eta\left(\prod\limits_{l\neq j}^{K}\omega_l^\eta-\omega_j^\eta\right)}\right)\mathcal{Z}_j\left(K,\theta\right)} 
	\end{align}
	where 
	\begin{align}\label{thm3:varpi}
	\mathcal{Z}_j\left(K,\theta\right)={\left(1+\mathcal{F}\left(\omega_j^\eta \theta,\eta\right)+ \varUpsilon_j\left(M\right)+\frac{C_1(K)}{\beta+C_2(K)}\right)}^{-K}, 
	\end{align}
	$\mathcal{F}\left(\cdot,\cdot\right)$ and  $\varUpsilon_j\left(\cdot\right)$ are given in (\ref{F}) and (\ref{tier}), $C_1(K)$ and $C_2(K)$ follow from (\ref{eq:clus2}), and $\beta=\frac{p_\textrm{tr}\lambda}{\lambda_u}$.
\end{theorem}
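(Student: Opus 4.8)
The plan is to express $P_\textrm{suc}(K,\theta) = \Pr\left[\gamma>\theta, \frac{x_1}{d_K}>c\right]$ as an iterated expectation, conditioning first on the cluster geometry $\{d_i\}_{i=1}^{K}$ (equivalently on $d_K$ together with the normalized distances $\{\omega_i\}$), and to exploit the independence of the user PPP $\Phi_u$ from the transmitter tiers. First I would write
\begin{align*}
P_\textrm{suc}(K,\theta)=\mathbb{E}_{d_K}\left[\Pr\left[\gamma>\theta\mid d_K\right]\cdot\Pr\left[x_1>cd_K\mid d_K\right]\right],
\end{align*}
where the factorization inside the expectation is legitimate because, conditioned on $d_K$, the event $\{\gamma>\theta\}$ depends only on the transmitter tiers $\Phi_0,\{\Phi_m\}$ and fading, while $\{x_1>cd_K\}$ depends only on $\Phi_u$, and these are independent. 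The second factor is the void probability $e^{-\lambda_u\pi c^2 d_K^2}$ exactly as in (\ref{eq:approx})(a); note $c$ depends on $K$ and $\beta$ but not on $d_K$, so it is a constant for the purposes of this conditional expectation.

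Next I would import the conditional link-success expression. From the proof of Proposition \ref{prop1}/Theorem 2 (Appendix B/C), conditioned on the cluster geometry and in the interference-limited regime $\sigma^2\to 0$, the conditional CCDF has the form $\Pr[\gamma>\theta\mid d_K]=\sum_{j=1}^{K}W_j\, e^{-\pi p_{\text{tr},o}\lambda d_K^2 \mathcal{F}(\omega_j^\eta\theta,\eta)}\,\mathrm{D}_j(M)$, where $W_j$ collects the $\omega$-dependent coefficient $\frac{\sum_{i=0}^{K-1}\alpha_i(\Omega)(q_{\text{tr}}^i-q_{\text{tr}}^K)(\omega_j^\eta)^i}{\omega_j^\eta(\prod_{l\neq j}^K\omega_l^\eta-\omega_j^\eta)}$ and the extrinsic-tier factor $\mathrm{D}_j(M)=e^{-\pi d_j^2(\cdots)\sum_m(\cdots)}$ contributes a $d_K^2$-proportional exponent via $d_j^2=\omega_j^2 d_K^2$. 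Multiplying by the void probability $e^{-\lambda_u\pi c^2 d_K^2}$ and then averaging over $d_K$ — which, being the distance to the $K$th nearest point of a PPP of intensity $p_{\text{tr}}\lambda$ (the effective transmitter density, after thinning by $p_{\text{tr}}$), has density $f_{d_K}(r)=\frac{2(\pi p_{\text{tr}}\lambda)^K}{(K-1)!}r^{2K-1}e^{-\pi p_{\text{tr}}\lambda r^2}$ per \cite{haenggi2005distance} — each term reduces to a Gamma-type integral $\int_0^\infty r^{2K-1}e^{-a r^2}dr=\frac{(K-1)!}{2a^K}$. Collecting the coefficient of $d_K^2$ in the exponent for term $j$ gives $\pi p_{\text{tr}}\lambda\big(1+\mathcal{F}(\omega_j^\eta\theta,\eta)+\varUpsilon_j(M)+c^2\frac{\lambda_u}{p_{\text{tr}}\lambda}\big)$, and with the prescribed choice $c^2=\frac{C_1(K)}{1+C_2(K)\lambda_u/(p_{\text{tr}}\lambda)}$ the last summand becomes exactly $\frac{C_1(K)}{\beta+C_2(K)}$, yielding $\mathcal{Z}_j(K,\theta)$ in (\ref{thm3:varpi}) after the $(\cdot)^{-K}$ from the Gamma integral.

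The main obstacle — and the reason the statement reads ``$\approx$'' — is that the cluster-access event $\{x_1>cd_K\}$ with the curve-fitted constant $c$ is itself only an approximation to the true selection event ``the typical user wins its cluster,'' so the independence-plus-factorization step inherits that modeling approximation; there is no exact joint law being computed here. A secondary subtlety is that $\{\gamma>\theta\}$ and $\{x_1>cd_K\}$ are only \emph{conditionally} independent given $d_K$, not unconditionally, so the averaging over $d_K$ must be done jointly (as above) rather than by multiplying two separately-averaged probabilities — doing the latter would give the wrong answer, since $p_\textrm{clus}$ in (\ref{eq:clus2}) and ${\bar F}'_\gamma$ in (\ref{d3}) each already averaged out $d_K$ on their own. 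Beyond that, the computation is the same Laplace-functional-of-the-PPP and Gamma-integral machinery already used for Theorems 1 and 2, specialized to $\sigma^2\to 0$, so I would present it compactly and refer back to Appendices B and C for the conditional CCDF derivation.
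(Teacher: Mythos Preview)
Your proposal is correct and follows essentially the same route as the paper's Appendix D: condition on $d_K$, factor the joint event into $\Pr[\gamma>\theta\mid d_K]\Pr[x_1>cd_K\mid d_K]$ using independence of $\Phi_u$ from the transmitter tiers, substitute the void probability $e^{-\lambda_u\pi c^2 d_K^2}$ and the conditional CCDF from Proposition~\ref{prop1} (with $\sigma^2\to 0$), then average over $d_K$ via the generalized Gamma density exactly as in Appendix~C, and finally plug in the fitted $c^2=\frac{C_1(K)}{1+C_2(K)/\beta}$. Your write-up is in fact more explicit than the paper's own proof --- in particular your remark that the two events are only \emph{conditionally} independent given $d_K$, so one must integrate the product rather than multiply the already-averaged expressions (\ref{d3}) and (\ref{eq:clus2}), is a useful clarification the paper leaves implicit.
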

\begin{proof}
	See Appendix D.  
\end{proof}
\begin{remarks}\normalfont
Note that the expression in Theorem 3 is also a function of the receiver intensity since $\beta=\frac{p_\textrm{tr}\lambda}{\lambda_r}$. When there is no extrinsic interference, the overall success probability is still a function of the ratio $\beta$. This is unlike Theorem 2, where the link success probability is independent of the intensity parameters when the $M$ interfering tiers are turned off. Also note that as the density ratio $\beta$ increases, the overall success probability in Theorem 3 approaches the link success probability in Theorem 2.
\end{remarks}
\begin{corollary}\normalfont
$\theta\rightarrow 0$. In the low outage regime, the overall success probability is limited by the energy harvesting parameters, the cluster size as well as the density ratio $\beta$. As $\theta\rightarrow 0$ in (\ref{eq:thm3}), $\lim\limits_{\theta\rightarrow0}P_\textrm{suc}(K,\theta)=p_\textrm{clus}(K,\beta)\left(1-G\right)$, where $G$ is given in (\ref{QQ}) and $p_\textrm{clus}(K,\beta)$ follows from (\ref{eq:clus2}). 
\end{corollary}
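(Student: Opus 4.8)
The plan is to evaluate the limit term-by-term inside the finite sum in (\ref{eq:thm3}), exploiting the fact that the summand coefficients are independent of $\theta$. Since $P_\textrm{suc}(K,\theta)$ is a finite linear combination of the factors $\mathcal{Z}_j(K,\theta)$ with $\theta$-independent coefficients, the limit $\theta\to0$ commutes with the sum, and the whole problem reduces to computing $\lim_{\theta\to0}\mathcal{Z}_j(K,\theta)$ for each $j$ and then re-summing.

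First I would establish the two vanishing limits that control $\mathcal{Z}_j(K,\theta)$ in (\ref{thm3:varpi}). From the definition (\ref{F}), the prefactor $\frac{2\omega_j^\eta\theta}{\eta-2}\to0$ while the Gauss hypergeometric factor tends to ${}_2F_1(1,1-\frac{2}{\eta},2-\frac{2}{\eta},0)=1$, so $\mathcal{F}(\omega_j^\eta\theta,\eta)\to0$. From (\ref{tier}), the factor $\theta^{2/\eta}\to0$ (for $\eta>2$) forces $\varUpsilon_j(M)\to0$ as well. Substituting these into (\ref{thm3:varpi}) leaves only the $\theta$-independent guard-zone term, giving $\lim_{\theta\to0}\mathcal{Z}_j(K,\theta)=\bigl(1+\frac{C_1(K)}{\beta+C_2(K)}\bigr)^{-K}$, which by (\ref{eq:clus2}) is exactly $p_\textrm{clus}(K,\beta)$. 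The crucial structural observation is that this limit is the same for every $j$, so it factors out of the sum.

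After factoring $p_\textrm{clus}(K,\beta)$ out, the residual sum is precisely $\sum_{j=1}^{K}$ of the coefficient of $\mathcal{V}_j$ in Theorem 2's expression (\ref{d3}), namely what one obtains by replacing $\mathcal{V}_j(K,\theta)$ with $1$. But the identical vanishing of $\mathcal{F}$ and $\varUpsilon_j$ shows $\lim_{\theta\to0}\mathcal{V}_j(K,\theta)=1$, so this residual sum equals $\lim_{\theta\to0}{\bar{F}}^{\prime}_{\gamma}(K,\theta)$. By Corollary~1 (whose conclusion, as stated there, also holds for Theorem 2), that limit is $1-G$. Combining the pieces yields $\lim_{\theta\to0}P_\textrm{suc}(K,\theta)=(1-G)\,p_\textrm{clus}(K,\beta)$, as claimed.

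The only genuinely nontrivial ingredient is the algebraic identity that the sum of the $\theta$-independent coefficients equals $1-G$; proving it directly would require a partial-fraction or Vandermonde-type argument over the normalized distances $\{\omega_j^\eta\}$. I would sidestep this calculation entirely by invoking Corollary~1, so the main obstacle is not arithmetic but the recognition that the residual coefficient sum coincides with the already-established $\theta\to0$ limit of the link success probability in Theorem 2. Everything else is the routine verification that $\mathcal{F}$ and $\varUpsilon_j$ vanish and that the common limit of $\mathcal{Z}_j$ factors through the sum.
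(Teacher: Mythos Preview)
Your proposal is correct and matches the paper's implicit reasoning: the corollary is stated without proof because it follows immediately by letting $\theta\to0$ in (\ref{eq:thm3}), noting that $\mathcal{F}(\omega_j^\eta\theta,\eta)\to0$ and $\varUpsilon_j(M)\to0$ so each $\mathcal{Z}_j(K,\theta)\to p_\textrm{clus}(K,\beta)$, and then invoking the $\theta\to0$ limit of Theorem~2 (as noted in Corollary~1) to identify the residual coefficient sum with $1-G$. Your recognition that the common limit of $\mathcal{Z}_j$ factors out and that the remaining sum is exactly the link success probability at $\theta=0$ is precisely the intended shortcut.
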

\begin{remarks}\normalfont
In the asymptotic regime $\theta\rightarrow 0$, we get fundamentally different insights on node cooperation for self-powered and traditionally-powered networks when $p_\textrm{ch}=1$.
In traditionally-powered networks, Corollary 7 implies that cooperation is in fact detrimental for the overall success probability, i.e., when $p_\textrm{tr}\rightarrow 1$, $G\rightarrow 0$ such that $\lim\limits_{\theta\rightarrow0}P_\textrm{suc}(K,\theta)=p_\textrm{clus}(K,\beta)$, which decreases as the cluster size $K$ is increased. To see this, note that $p_\textrm{clus}(K,\beta)$ in (\ref{eq:clus2}) admits a simpler approximation $p_\textrm{clus}(K,\beta)\approx 1-\frac{K}{\beta}$ when $\beta$ is sufficiently large. For a self-powered network, however, the gain due to cooperation  captured by the term $(1-G)=(1-{q_\textrm{tr}}^K)$ more than compensates for the loss due to $p_\textrm{clus}$ when $\beta$ is sufficiently high. This suggests that cooperation helps improve the overall performance in an energy harvesting network. Furthermore, due to the underlying tradeoff between the link reliability and the fraction of receivers getting served, there is an optimal cluster size that maximizes the asymptotic success probability for a given $\beta$. 
\end{remarks}

\section{Simulation Results}\label{secSIM}
In this section, we use simulations to validate the analytical results for the link success probability and the overall success probability under different network scenarios.  
We also investigate the impact of several parameters such as cluster size, energy harvesting rate and energy buffer size on the system performance to derive intuition on the system operations. 
\subsection{Link Success Probability}
\begin{figure}[thb]
	\centering
	\includegraphics[width=3.4in]{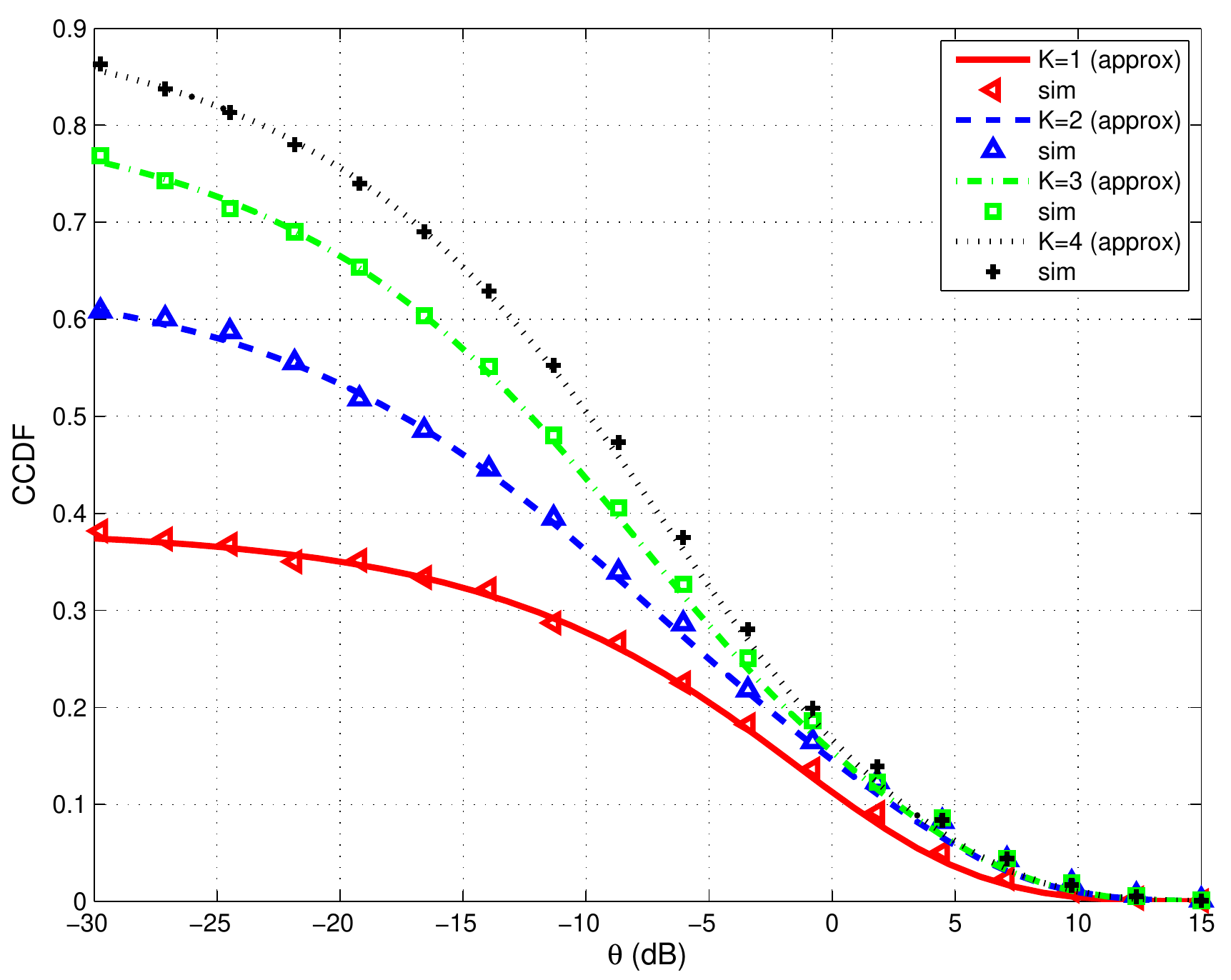}
	\caption{CCDF of $\gamma$ for various values of $K$ given $p_{\text{ch}}=0.7, \lambda=0.01, {\eta=4}$, $\sigma^2=-114$ dBm, and $\{d_i\}_{i=1}^{4}\!=\!\!\{5,10,10,10\}$. The plot is obtained for a single interfering tier $M=1$ with ${\lambda}_1=0.01$, $p_{\text{tr}}^{(1)}=0.53$ and $P_1=2$. Energy harvesting parameters for the TX tier are $\{\rho_i\}_{i=1}^{4}=\{0.4,0.45,0.5,0.55\}$, $\rho_o=0.55$ and $S=2$. Simulation (sim) results agree with the analytical approximation (approx) based on Theorem 1.}
	\label{fig:approx}
\end{figure}
	
\begin{figure}[thb]
	\centering
	\includegraphics[width=3.4in]{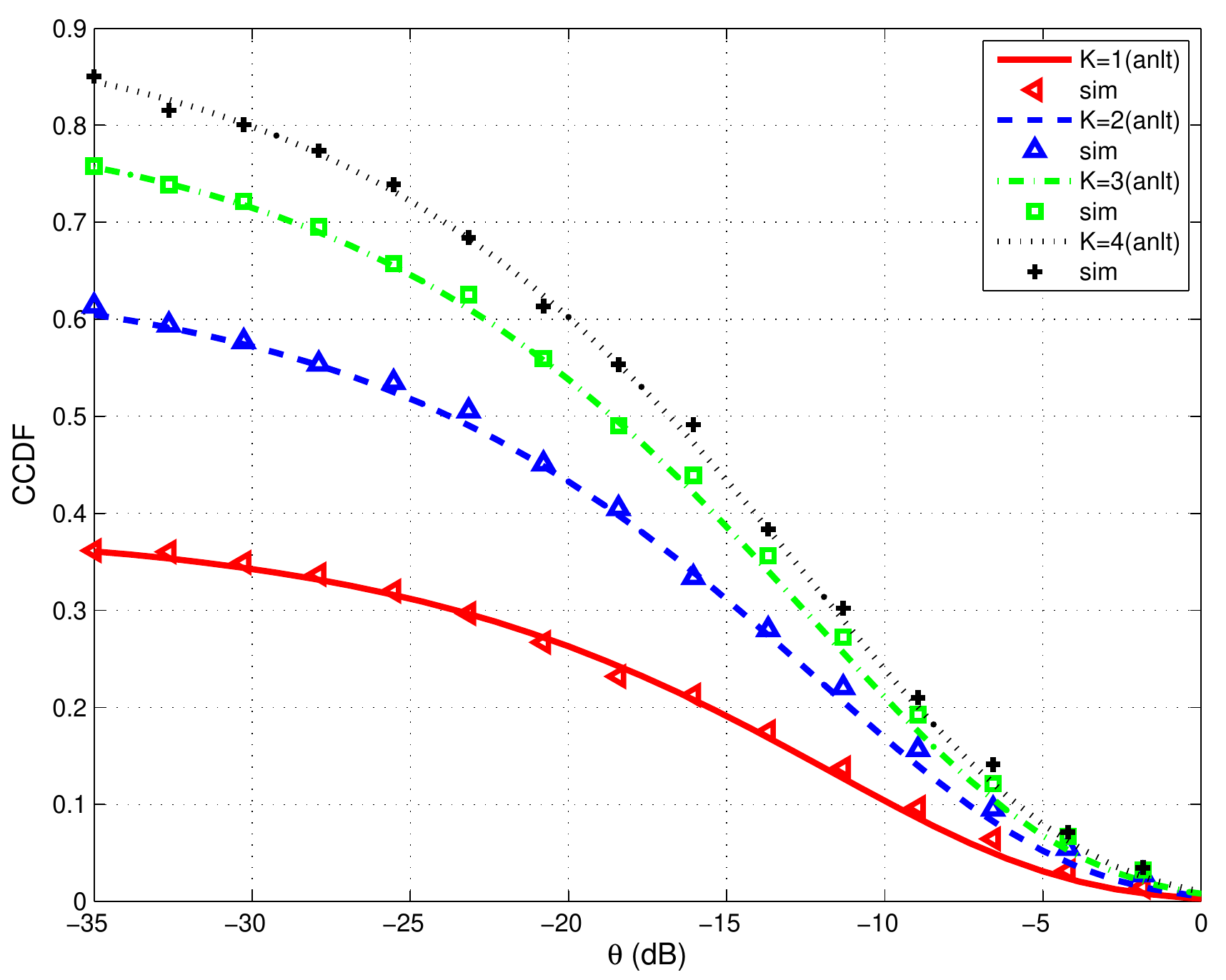}
	\caption{CCDF of $\gamma$ for various values of $K$ given a distinct cluster geometry	$\{d_i\}_{i=1}^{4}\!=\!\!\{10,12,14,16\}$. Other simulation parameters are same as in Fig. \ref{fig:approx}.
	Simulation (sim) results agree with the analytical (anlt) results based on Proposition 1.}
	\label{fig:dist}
\end{figure} 
We first consider the case with heterogeneous in-cluster TXs, and plot ${\bar{F}}_\gamma\left(K,\theta\right)$, the CCDF of $\gamma$ or the link success probability, for various values of cluster size $K$ in Fig. \ref{fig:approx}. The plot includes the curves obtained using the analytical approximation (approx) based on Theorem 1. It also includes the results obtained by Monte Carlo simulations (sim) for the given set of parameters. The analytical model is validated since there is a complete agreement between the analytical and simulation results. 
Similarly, in Fig. \ref{fig:dist}, we consider the case where the in-cluster TXs have a distinct cluster geometry. It can be observed that the simulation results match completely with the (exact) analytical (anlt) results based on Proposition 1.
  
We can draw two conclusions from Fig. \ref{fig:approx} and \ref{fig:dist}.
First, the SINR distribution at the receiver improves with $K$ due to an additional transmit diversity gain. Second, the outage performance is limited by the energy harvesting capabilities as the CCDF converges to $1-G$ in the low-outage regime ($\theta\to0$) for any given cluster. This is consistent with Corollary 1. 

Next, we consider the case where the TXs have identical energy harvesting capabilities. In Fig. 3, we plot ${\bar{F}}^{\prime}_{\gamma}(K,\theta)$, the CCDF of $\gamma$ with the absolute in-cluster distances averaged out. The plot in Fig. \ref{fig:thm2a} is obtained with the interfering tiers turned off. Note that the intensity parameter is not specified as the performance is independent of $\lambda$ for this case. It can be seen that there is a complete match between the analytical curve based on Theorem 2 and the simulated CCDF obtained via Monte Carlo simulations. A complete match between analytical and simulation results can also be observed in Fig. \ref{fig:thm2b}, which is obtained with the interfering tiers turned on.
\begin{figure*}
	\centering

	\subfloat[][]{%
		\centering
		\includegraphics[width=3.4in]{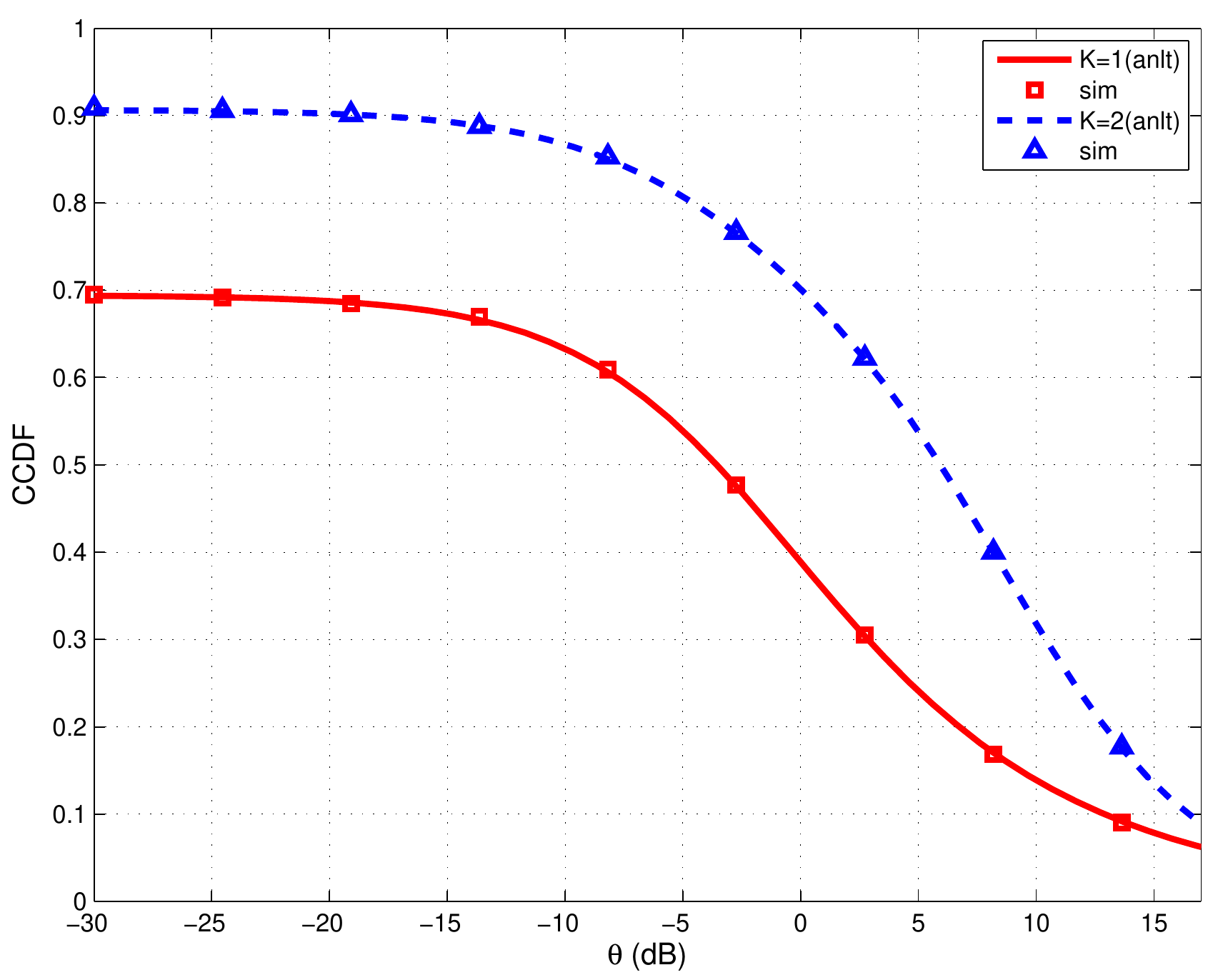}
		\label{fig:thm2a}} 
	\subfloat[][]{
		\centering
		\includegraphics[width=3.4in]{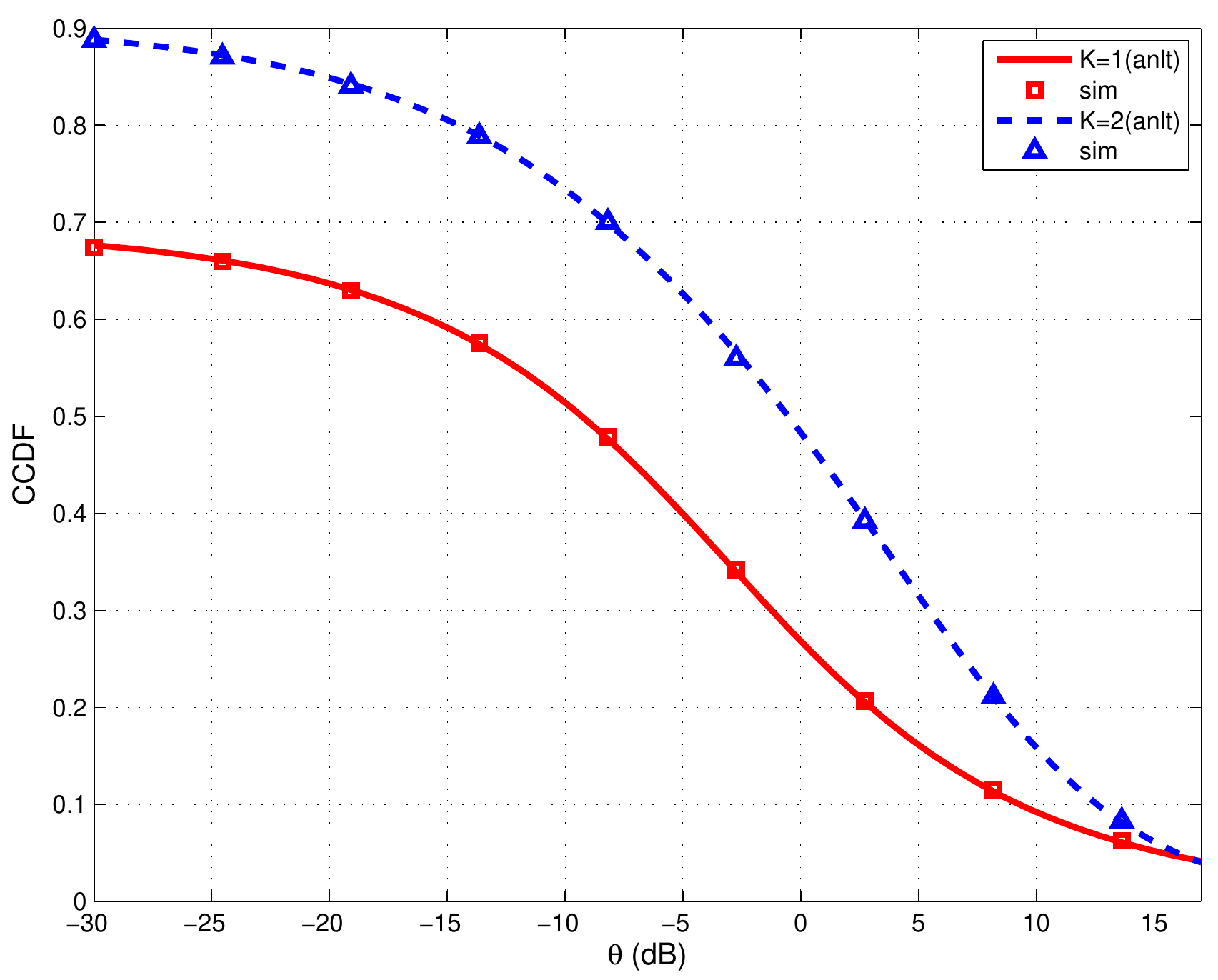}
    	\label{fig:thm2b}}	

\caption{(a) CCDF of $\gamma$ in the interference-limited regime for $K\in\{1,2\}$ with the $M$ tiers turned off (i.e., intrinsic interference only). The plot includes the analytical (anlt) curve based on Theorem 2 as well as the simulated (sim) CCDF of $\gamma$. The simulation parameters are $\omega_1=1$ for $K=1$ and $\{\omega_i\}_{i=1}^{2}=\{0.5,1\}$	for $K=2$, $p_{\text{ch}}=0.8$ and ${\eta=4}$. The energy harvesting parameters are $\rho=0.75$ and $S=2$ for all TXs. (b) For the same parameters, CCDF of $\gamma$ is plotted when both intrinsic and extrinsic interference are present. Other parameters include $M=1$, $P_1=2$, and $p_{\text{tr}}^{(1)}=0.5$. Unlike (a) which is independent of intensity, (b) is obtained for $\lambda=0.1$ and $\lambda_1=0.05$.}
\end{figure*}

As demonstrated above, the considered framework can be used to get general performance insights for a large class of self-powered wireless networks. We next study how the energy harvesting parameters limit the outage performance. 
\subsubsection*{Impact of Energy Buffer Size on Performance}
We first consider how outage probability varies as a function of energy buffer size. To get general performance insights, we use the asymptotic outage probability $P_\textrm{out}^\textrm{as}\triangleq G$, which defines an upper limit on performance given the energy harvesting parameters and cluster size. Note that similar insights can be obtained if the analysis is particularized for a given outage threshold $\theta$ using (\ref{d3}). 
In Fig. \ref{fig:c}, the asymptotic outage probability $P_\textrm{out}^\textrm{as}$ is plotted against the energy buffer size $S$ (in log scale) for various values of the cluster size $K$. We see that outage can be considerably reduced by increasing the buffer size until a limit, beyond which the curves tend to flatten out. The existence of this outage probability floor follows from Corollary 5, and the floor value is specified by (\ref{floor1}). It appears that appreciable performance gains can be extracted with a relatively small buffer size. Moreover, the benefits of having a high-capacity energy buffer tend to increase with the cluster size as depicted by the increasing steepness of the slopes (when $S$ is small) as $K$ is increased. 
This interplay between cluster and buffer size also suggests that the extent of cooperation could influence the design of energy harvesting devices, even though the energy harvesting process is assumed to be independent across the cooperating TXs. 
In addition, we observe that the outage is reduced by roughly an order of magnitude with every addition in the cluster size. 

\begin{figure*}[thb]
	\centering
		\subfloat[][]{%
			\centering
			\includegraphics[width=3.4in]{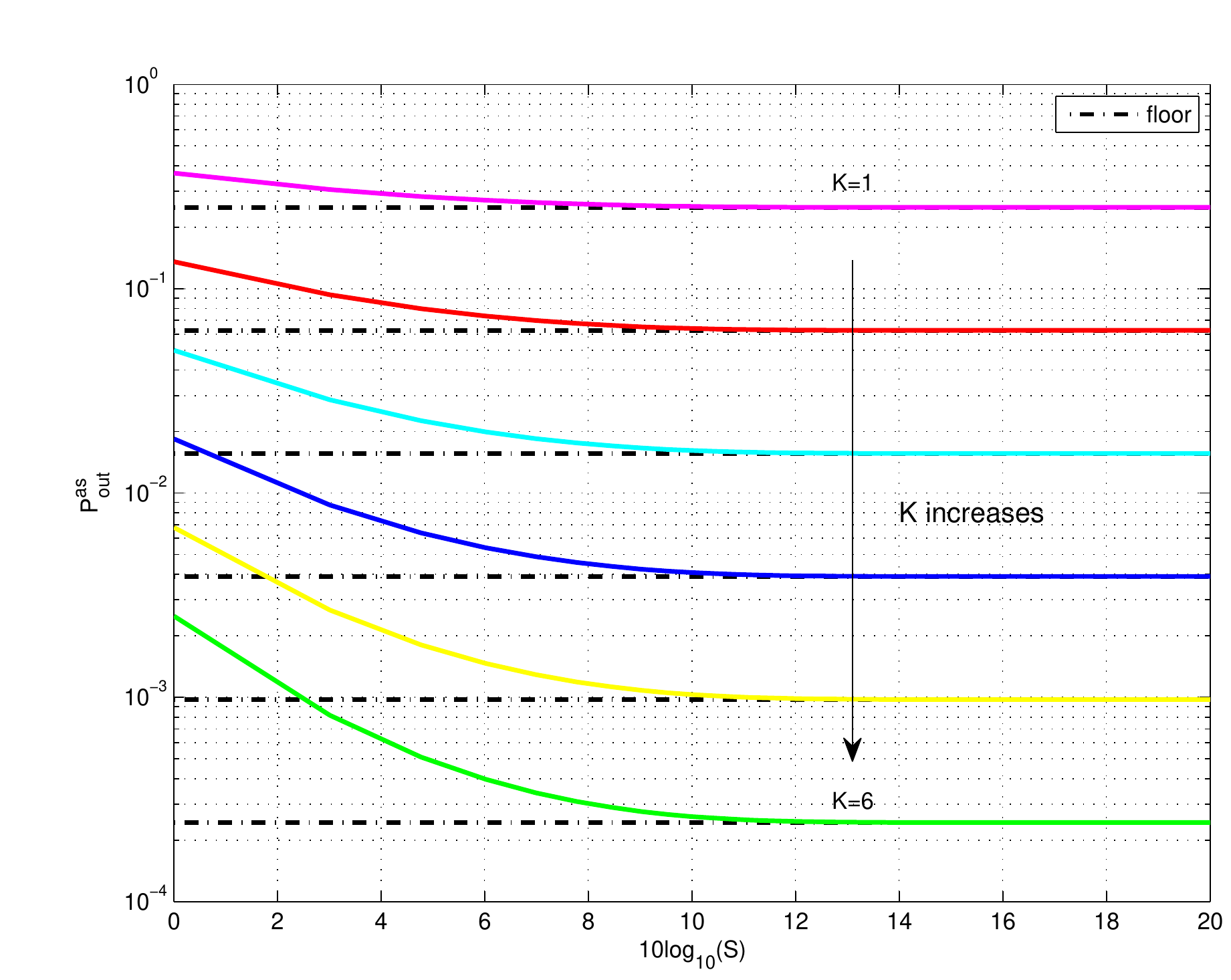}
			\label{fig:c}} 
		\subfloat[][]{
			\centering
			\includegraphics[width=3.4in]{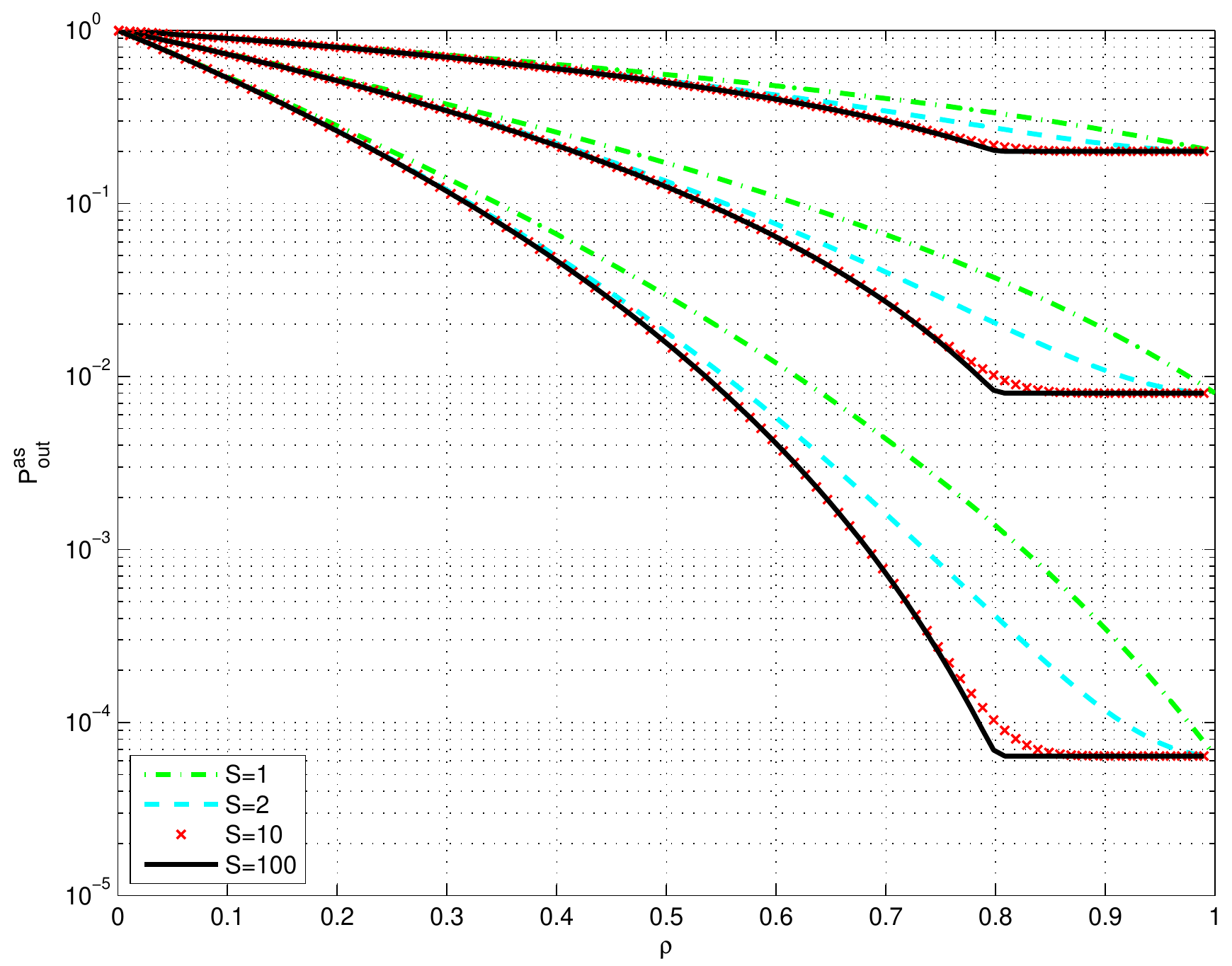}
	    	\label{fig:d}}	
\caption{(a) Impact of energy buffer size $S$ on asymptotic outage probability $P_\textrm{out}^\textrm{as}$ for various values of $K$ at fixed $\rho=0.75$ and $p_\text{ch}=0.8$. The outage probability floor is based on Corollary 5. The utility of having a larger energy buffer increases with the cluster size. (b) Impact of energy harvesting rate $\rho$ on asymptotic outage probability $P_\textrm{out}^\textrm{as}$ for various values of energy buffer size $S$ at fixed $p_\text{ch}=0.8$. The curves are plotted for cluster size $K\in\{1,3,6\}$. The outage performance becomes independent of the energy harvesting rate as the latter exceeds the channel access probability for sufficiently large energy buffers.}	
\end{figure*}
\subsubsection*{Impact of Energy Harvesting Rate on Performance}
In Fig. \ref{fig:d}, the asymptotic outage probability $P_\textrm{out}^\textrm{as}$ is plotted against the energy harvesting rate $\rho$ for various values of energy buffer size $S$. We observe that outage reduces with the increase in energy harvesting rate at the transmitters. Moreover, using a larger energy buffer brings about further reduction in outage due to enhanced energy availability at the transmitters. Furthermore, the gains from using a larger buffer size are more evident at relatively high energy harvesting rates. Fig. \ref{fig:d} also corroborates the previous observation (cf. Fig. \ref{fig:c}) that substantial performance can be extracted by using a relatively small buffer size. For example, $S=10$ suffices for this setup.
In addition, if the energy harvesting rate $\rho$ exceeds the channel access probability $p_\text{ch}$, and the buffer size is allowed to increase, the outage performance limit becomes independent of the energy harvesting rate $\rho$. For example, this behavior is evident in Fig. \ref{fig:d} for $S=100$. This is because under these conditions, the energy harvesting system tends to behave like a traditionally powered system. This is consistent with Corollary 6.   
%

Note that the previous results have been obtained for the link success probability. Another useful performance metric is the overall success probability, which is discussed next. 
\subsection{Overall Success Probability}
We first validate the analytical approximation for the cluster access probability $p_\textrm{clus}(K,\beta)$ proposed in (\ref{eq:clus2}). In Fig. \ref{fig:clus1}, there is a nice agreement between analytical and Monte Carlo simulation-based results. Moreover, in line with Corollary 7, $p_\textrm{clus}(K,\beta)$ decreases with the cluster size $K$ and increases with the density ratio $\beta$. 
\begin{figure}[thb]
	\centering
	\includegraphics[width=3.4 in]{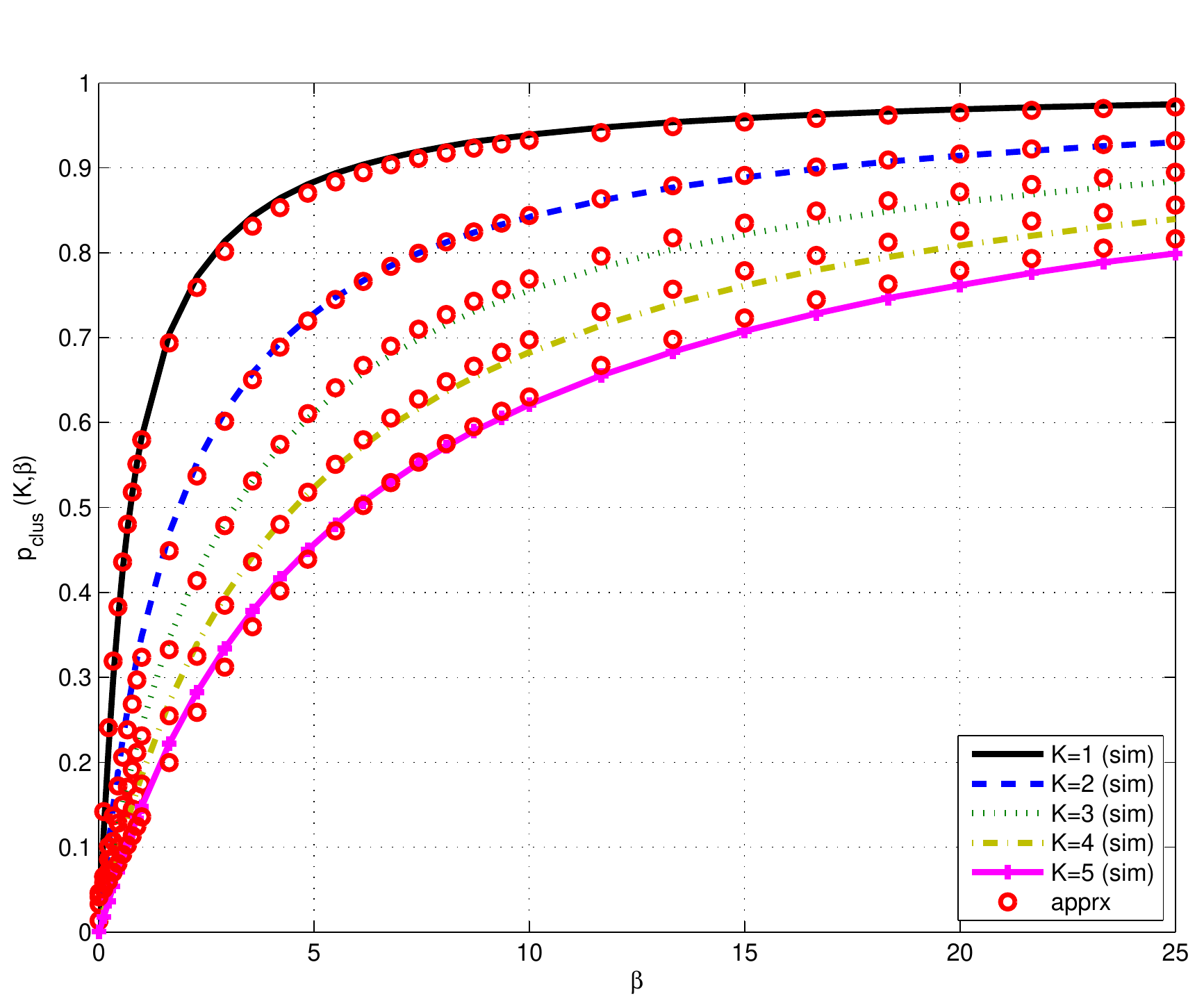}
	\caption{Cluster access probability $p_\textrm{clus}(K,\beta)$ as a function of density ratio $\beta$ for different values of the cluster size $K$. The results based on analytical approximation (apprx) in (\ref{eq:clus2}) closely match the simulation (sim) results. $p_\textrm{clus}$ also gives the overall success probability for a traditionally-powered network in the low outage regime ($\theta\rightarrow 0$). The results are in line with Remark 5.}
	\label{fig:clus1}
\end{figure} 

The overall success probability is plotted in Fig. \ref{fig:clus2}. The plots shows that cooperation is generally beneficial for the overall success probability in self-powered networks (unlike the traditional case as discussed in Remark 5). 
Moreover, it also captures the underlying tension between two competing metrics, the link performance and the fraction of receivers getting served.
As explained in Remark 5, this leads to an optimal cluster size that maximizes the overall performance. 
We further observe that the optimal cluster size increases with the density ratio due to an underlying increase in the cluster access probability.  
Though the plot is obtained for the asymptotic case $\theta\rightarrow 0$, similar trends can be observed when the analysis is particularized for a given value of $\theta$.  
\begin{figure}[thb]
	\centering
	\includegraphics[width=3.4 in]{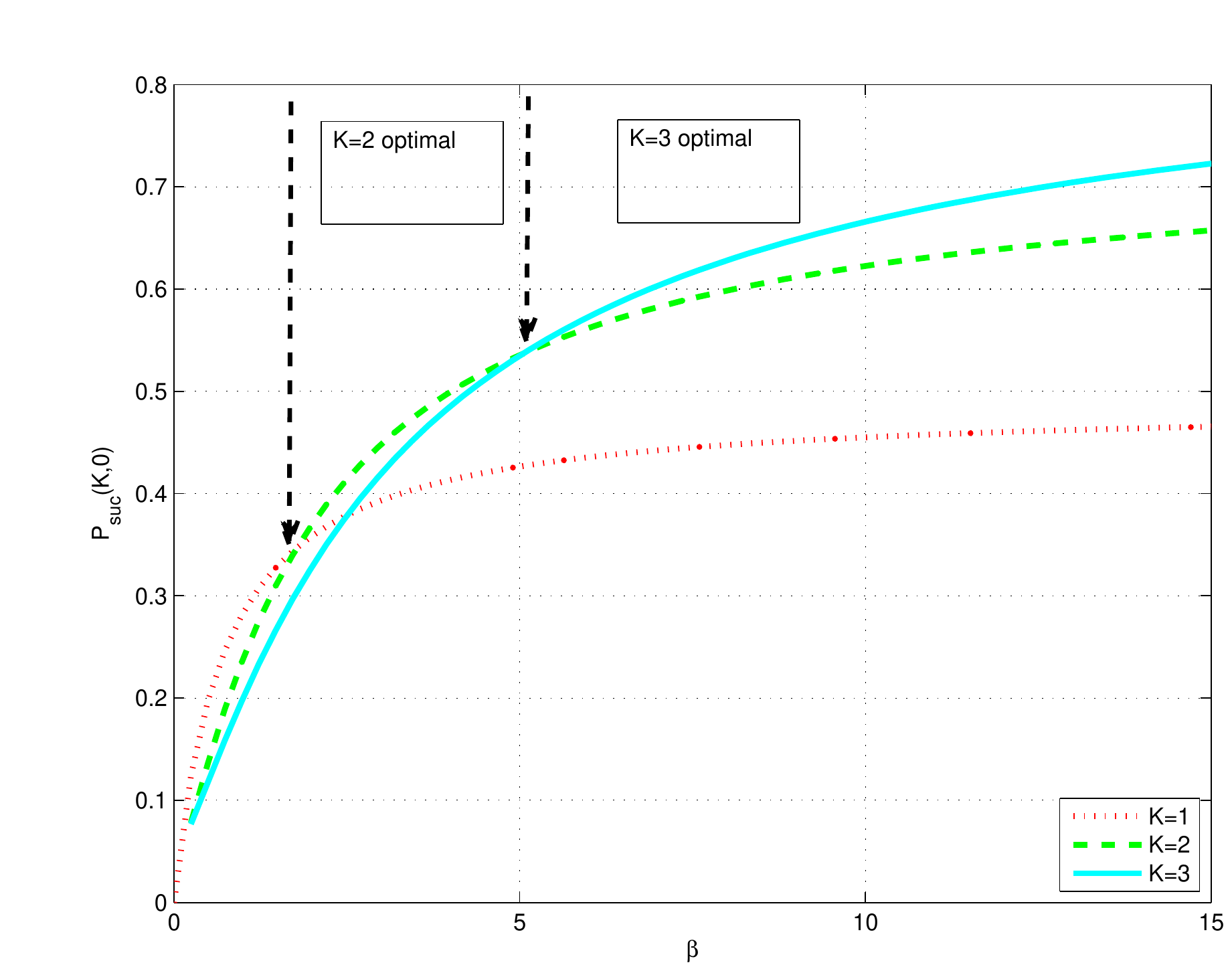}
	\caption{Overall success probability $P_\textrm{suc}(K,\theta)$ in the asymptotic regime $(\theta\rightarrow 0)$ as a function of the density ratio $\beta$ for different values of the cluster size $K$. Cooperation improves performance in a self-powered network. There is an optimal cluster size $K$ for a given value of $\beta$. The energy harvesting parameters are fixed to $S=2$, and $\rho=0.5$, and $p_\textrm{ch}=0.8$.}
	\label{fig:clus2}
\end{figure} 
\section{Conclusions}\label{secCONCLUSION}
We have considered a large-scale cooperative wireless network where clusters of self-powered transmitters jointly serve a desired receiver.
Using stochastic geometry, we have provided a tractable analytical framework to characterize the link and network-level performance at the receiver amid heterogeneous network interference. The analysis leads to several useful insights on system operation. First, the overall success probability might degrade when the cluster size is too large, unlike the link success probability which improves with the cluster size. This is due to the underlying tradeoff between the link quality and the fraction of receivers getting served.
Moreover, the resulting optimal cluster size increases with the density ratio due to the underlying improvement in the cluster access probability. 
Second, we get fundamentally different insights on node cooperation  in self-powered and traditionally-powered networks. In particular, in the asymptotic regime where the link outage threshold is small, it is optimal not to cooperate in a traditionally-powered network. In a self-powered network, however, cooperation could be beneficial since it helps overcome the performance loss due to uncertain energy availability at the transmitter. Third, the overall performance improves with the buffer size and the energy harvesting rate. Furthermore, most performance gains can be extracted using a relatively small buffer size, with the improvement becoming more pronounced for large clusters in sufficiently dense networks.  

 

\appendices
\allowdisplaybreaks
\section*{Appendix A: Derivation of Theorem 1}
We first state a lemma that will be used in the following derivation. 
\label{App:Lemma_1}
\renewcommand{\theequation}{A.\arabic{equation}}
\setcounter{equation}{0}
\begin{lemma}\normalfont
For a non-negative integer $n$, and a positive real number $x$, the regularized upper incomplete Gamma function $\mathcal{Q}(n,x)$ can be upper bounded as $\mathcal{Q}(n,x)\leq 1-(1-e^{-c x})^n$,
where $c={(n!)}^{-\frac{1}{n}}$.
\end{lemma}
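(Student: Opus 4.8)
Write \(P(n,x)=1-\mathcal{Q}(n,x)\) for the regularized \emph{lower} incomplete Gamma function; the claimed bound is exactly \(P(n,x)\ge\bigl(1-e^{-cx}\bigr)^n\), equivalently \(1-P(n,x)^{1/n}\le e^{-cx}\). The plan is to introduce \(h(x)=-\log\bigl(1-P(n,x)^{1/n}\bigr)\) on \((0,\infty)\), so that the statement becomes \(h(x)\ge cx\) with \(c=(n!)^{-1/n}\), and then to prove this by showing that \(h\) is convex with \(h(0^{+})=0\) and \(h'(0^{+})=c\): a convex function lies above its tangent line at the left endpoint, and that tangent is precisely \(y=cx\). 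The degenerate cases are trivial: for \(n=0\) both sides are \(0\) (under the convention \(\mathcal{Q}(0,x)=0\)), and for \(n=1\) one has \(c=1\) and \(\mathcal{Q}(1,x)=e^{-x}=1-(1-e^{-x})^1\), i.e.\ equality, so I would assume \(n\ge 2\) henceforth.

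First I would pin down the behavior at the origin. From \(\gamma(n,x)=\int_0^x t^{n-1}e^{-t}\,dt=\tfrac{x^n}{n}-\tfrac{x^{n+1}}{n+1}+O(x^{n+2})\) and \(\Gamma(n)=(n-1)!\), one gets \(P(n,x)=\tfrac{x^{n}}{n!}\bigl(1-\tfrac{n}{n+1}x+O(x^2)\bigr)\), hence \(P(n,x)^{1/n}=cx\bigl(1-\tfrac{1}{n+1}x+O(x^2)\bigr)\) with \(c=(n!)^{-1/n}\). Expanding the logarithm gives \(h(x)=cx+\tfrac{c}{2}\bigl(c-\tfrac{2}{n+1}\bigr)x^2+O(x^3)\), so \(h(0^+)=0\) and \(h'(0^+)=c\); moreover the quadratic coefficient is nonnegative because \((n!)^{1/n}\le\tfrac{n+1}{2}\) by the AM--GM inequality applied to \(1,2,\dots,n\), i.e.\ \(c\ge\tfrac{2}{n+1}\). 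This already establishes the inequality for small \(x\) and makes the convexity route plausible.

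The crux is the global convexity of \(h\). Writing \(g=P(n,x)^{1/n}\), a direct computation shows \(h''\ge 0\iff g''(1-g)+(g')^2\ge 0\) on \((0,\infty)\). Using the closed forms \(P'(n,x)=\tfrac{x^{n-1}e^{-x}}{(n-1)!}\) and \(P''(n,x)=\tfrac{n-1-x}{x}\,P'(n,x)\), this reduces to a single explicit inequality in \(x\) (with parameter \(n\)), which I expect to follow by showing \(h'=\tfrac{g'}{1-g}\) is nondecreasing — equivalently that \(1-P(n,x)^{1/n}\) is log-concave — and this log-concavity step is where I anticipate the real work. As sanity checks, the two asymptotic regimes are easy: near \(0\) one has the expansion above with nonnegative curvature, and as \(x\to\infty\), since \(\mathcal{Q}(n,x)\sim\tfrac{x^{n-1}e^{-x}}{(n-1)!}\), one gets \(h(x)=x-(n-1)\log x+\log n!+o(1)\), which is convex for large \(x\). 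If a clean elementary proof of the global inequality proves stubborn, I would note that the statement is Alzer's inequality for the incomplete Gamma function specialized to integer order \(a=n\), and invoke that reference; but the self-contained convexity argument is the approach I would attempt first, with the log-concavity of \(1-P(n,x)^{1/n}\) being the main obstacle.
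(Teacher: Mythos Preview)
The paper does not actually prove this lemma: its entire proof reads ``See \cite{alzer1997some}.'' Your fallback --- invoking Alzer's inequality for the incomplete Gamma function --- is therefore exactly what the paper does, and would suffice.

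Your self-contained route is more ambitious and largely sound in its setup. The reformulation $h(x)\ge cx$ with $h(x)=-\log\bigl(1-P(n,x)^{1/n}\bigr)$ is correct, the boundary expansions $h(0^{+})=0$ and $h'(0^{+})=c$ are computed correctly, and the observation that $c\ge 2/(n+1)$ via AM--GM is a nice consistency check. The reduction of the problem to the convexity of $h$ (equivalently, the log-concavity of $1-P(n,\cdot)^{1/n}$) is also valid. However, you rightly flag that step as the main obstacle, and you do not resolve it: that log-concavity is essentially the substantive content of Alzer's theorem, and establishing it cleanly is nontrivial --- numerically the inequality $g''(1-g)+(g')^{2}\ge 0$ is quite delicate for moderate $n$. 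So your sketch correctly locates the hard part without closing it; a weaker sufficient condition (monotonicity of $h(x)/x$ rather than full convexity) would also do and might be easier, but is still not immediate. In the context of this paper the lemma is only a technical device for approximating the Gamma-type expectation in Theorem~1, so the citation is all that is required.
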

\begin{proof}
See \cite{alzer1997some}.
\end{proof}

Using (\ref{snr}), we write ${\bar{F}}_\gamma\left(K,\theta\right)=\Pr\left[\gamma>\theta\right]=\mathbb{E}\left[\Pr\left[S_K >\theta d_K^\eta\left(I+{\sigma}^{2}\right)\right]\right]$, where $S_K=\sum\limits_{i=1}^{K}{\mathbbm{1}_{i}}\hat{H}_{i}$ and $\hat{H}_{i}={H}_{i}\omega_{i}^{-\eta}$. 
To proceed further, we first find the CCDF of $S_K$, where $S_K$ is a sum of $K$ independent random variables. Note that $\hat{H}_{i}$ is exponentially distributed with mean $\omega_{i}^{-\eta}$, whereas the indicator ${\mathbbm{1}_{i}}$ follows a Bernoulli distribution with mean $p_{\text{tr},i}$, independently of $\hat{H}_{i}$. The CCDF of $S_K$ can be expressed as ($x \geq 0$)
	\begin{align}\label{ccdf1}
	&{\bar{F}}_{S_k}\left(x\right)= \nonumber\\
	& G\displaystyle \sum\limits_{u=1}^{\tau}\sum\limits_{v=1}^{n_u}\left(\sum\limits_{m=0}^{K-1}\left({\alpha_m(\hat{\Omega})}-\alpha_m(\Omega)\right)\textrm{A}_{m}(n_u,v)\right){\mathcal{Q}(v,\delta_u^\eta x)}		 
	\end{align} 
where $\textrm{A}_{m}(n_u,v)$ is given by (\ref{coeff}), while $\mathcal{Q}(a,b)=\frac{1}{\Gamma(a)}\int\limits_{b}^{\infty}t^{a-1}e^{-t}\text{d}t$ denotes the regularized upper incomplete Gamma function.
The expression in (\ref{ccdf1}) can be obtained by finding the characteristic function of $S_K$, applying partial fraction expansion and then taking the inverse transform \cite{talhaLet}. Conditioning on the aggregate interference power $I$, we can write ${\bar{F}}_{\gamma|I}\left(K,\theta\right)={\bar{F}}_{S_K}\left(\theta d_K^\eta(I+\sigma^2)\right)$. 
Using (\ref{ccdf1}), and by unconditioning with respect to $I$, ${\bar{F}}_{\gamma}\left(K,\theta\right)$ can be expressed as (for $\theta\geq 0$)
\allowdisplaybreaks[0]{
\begin{align}\label{d22}
	{\bar{F}}_{\gamma}(K,\theta)=  
	G\displaystyle &
	 \sum\limits_{u=1}^{\tau}\sum\limits_{v=1}^{n_u}\left(\sum\limits_{m=0}^{K-1}\left({\alpha_m(\hat{\Omega})}-\alpha_m(\Omega)\right)\textrm{A}_{m}(n_u,v)\right) \nonumber\\
	&\times
	 \mathbb{E}\left[\mathcal{Q}\left(v,\delta_u^{\eta} d_K^\eta \theta\left(I+\sigma^{2}\right) \right)\right]
\end{align} 
}where the expectation in (\ref{d22}) is over the aggregate interference power $I$, i.e., over both fading and interferer locations. 
	\begin{figure*}[!t]
		\normalsize
		\renewcommand{\theequation}{C.\arabic{equation}}
		\setcounter{mytempeqncnt}{\value{equation}}
		\setcounter{equation}{1}
		\begin{align}\label{intg}
			\mathbb{E}\left[\mathrm{C}_j(\theta)\right]&=\int\limits_{r>0}^{}e^{-\pi p_{\text{tr}}{\lambda} r^2 \mathcal{F}\left({\omega_j^\eta}\theta,\eta\right)} 
			\prod\limits_{m=1}^{M}{e^{-\pi p_{\text{tr}}^{(m)}\lambda_m {\omega_j}^{2} {\left(P_m \theta\right)}^{\frac{2}{\eta}}\Gamma\left(1+\frac{2}{\eta}\right)\Gamma\left(1-\frac{2}{\eta}\right)}}
			\frac{2{(p_{\text{tr}}\lambda\pi r^2)}^Ke^{-p_{\text{tr}}\lambda\pi r^2}}{r\Gamma(K)} \text{d}r \nonumber \\
			&\overset{}{=}\int\limits_{0}^{\infty} \frac{e^{-\upsilon}\upsilon^{K-1}}{\Gamma(K)\left(1+\mathcal{F}\left(\omega_j^\eta \theta,\eta\right)+{\omega_j}^{2} {\theta}^{\frac{2}{\eta}}\Gamma\left(1+\frac{2}{\eta}\right)\Gamma\left(1-\frac{2}{\eta}\right)\sum\limits_{m=1}^{M}{\tilde{p}_{\text{tr}}^{(m)}\tilde{\lambda}_m {P_m}^{\frac{2}{\eta}}}\right)^{K}} \text{d}\upsilon \nonumber \\
			&=\frac{1}{\left(1+\mathcal{F}\left(\omega_j^\eta \theta,\eta\right)+{\omega_j}^{2}{\theta}^{\frac{2}{\eta}}\Gamma\left(1+\frac{2}{\eta}\right)\Gamma\left(1-\frac{2}{\eta}\right)\sum\limits_{m=1}^{M}{\tilde{p}_{\text{tr}}^{(m)}\tilde{\lambda}_m {P_m}^{\frac{2}{\eta}}}\right)^{K}}   
		\end{align}
		\begin{align}\label{dummy}
			\upsilon=\pi r^2\left(p_{\text{tr}}{\lambda}\left(1+\mathcal{F}\left({\omega_j^\eta}\theta,\eta\right)\right)+{\omega_j}^{2}{\theta}^{\frac{2}{\eta}}\Gamma\left(1+\frac{2}{\eta}\right)\Gamma\left(1-\frac{2}{\eta}\right)\sum\limits_{m=1}^{M}{\tilde{p}_{\text{tr}}^{(m)}\tilde{\lambda}_m{P_m}^{\frac{2}{\eta}}}\right)
		\end{align}
		\setcounter{equation}{\value{mytempeqncnt}}
		\hrulefill
		\vspace*{4pt}
	\end{figure*} 
A series expansion of the incomplete Gamma function gives the following alternative form  
\begin{align}\label{impos}
\mathbb{E}&\left[\mathcal{Q}\left(v,\delta_u^{\eta}d_K^\eta \theta\left(I+\sigma^{2}\right) \right)\right]= \nonumber\\
&
\sum\limits_{i=0}^{v-1}
e^{-\delta_u^\eta d_K^\eta \theta\sigma^{2}}
\mathbb{E}\left[e^{-\delta_u^\eta d_K^\eta \theta I}
\frac{\left(\delta_u^\eta d_K^\eta \theta\left(I+\sigma^{2}\right)\right)^i}{i!}
\right]. 
\end{align}
To avoid directly dealing with the expectation in (\ref{impos}), which seems rather unwieldy, we leverage the upper bound for $\mathcal{Q}\left(\cdot,\cdot\right)$ given in Lemma 2.
\begin{align}
\mathbb{E}&\left[\mathcal{Q}\left(v,\delta_u^{\eta}d_K^\eta \theta\left(I+\sigma^{2}\right) \right)\right]\leq \mathbb{E}\left[1-\left(1-e^{-\kappa\delta_u^{\eta}d_K^\eta \theta\left(I+\sigma^{2}\right)}\right)^v\right] \nonumber\\
&
=\mathbb{E}\left[\sum\limits_{\ell=1}^{v}\binom{v}{\ell}(-1)^{\ell+1}e^{-\kappa\ell\delta_u^{\eta}d_K^\eta \theta\left(I+\sigma^{2}\right)} \right] \nonumber\\
&
=\sum\limits_{\ell=1}^{v}\binom{v}{\ell}(-1)^{\ell+1}e^{-\kappa\ell\delta_u^{\eta}d_K^\eta \theta\sigma^{2}}
\mathbb{E}\left[e^{-\kappa\ell\delta_u^{\eta}d_K^\eta \theta I}\right]
\end{align}
where $\kappa={\left(v!\right)}^{-\frac{1}{v}}$, and the last equation follows by applying Binomial theorem. The next step is to evaluate the expectation $\mathbb{E}\left[e^{-\kappa\ell\delta_u^{\eta}d_K^\eta \theta I}\right]$. Since the PPPs $\{\Phi_m\}_{m=0}^{M}$ are assumed to be independent, it follows that
\allowdisplaybreaks
	\begin{align}\label{d44}
	\mathbb{E}\left[e^{-\kappa\ell\delta_u^{\eta}d_K^\eta \theta I}\right]=\mathbb{E}\left[e^{\kappa\ell\delta_u^{\eta}d_K^\eta \theta I_0}\right]\prod\limits_{m=1}^{M}\mathbb{E}\left[e^{-\kappa\ell\delta_u^{\eta}d_K^\eta \theta I_m}\right]
	\end{align}
where the first term (with $m=0$) corresponds to intrinsic interference, whereas the remaining terms ($m\geq1$) correspond to extrinsic interference. The expectation in (\ref{d44}) can be evaluated using the Laplace transform of $I_m$, which we denote by $\mathcal{L}_{I_m}(s)=\mathbb{E}\left[e^{-sI_m}\right]$.
\allowdisplaybreaks{\begin{align}\label{d4}
	\mathcal{L}_{I_m}(s)&\overset{}{=}\mathbb{E}\left[e^{-s \left( \sum\limits_{d_{i}\in \Phi_m \setminus \mathcal{B}(g_m)}{P_m \mathbbm{1}_{i} H_{i}d_{i}^{-\eta}}\right)}\right] \nonumber \\
	&
	\overset{(a)}{=}\mathbb{E}\left[\prod\limits_{d_{i}\in \hat{\Phi}_m \setminus \mathcal{B}(g_m)}\text{E}\left[e^{-s P_m H_{i}d_{i}^{-\eta}}\right]\right] \nonumber \\
	&\overset{(b)}{=}\mathbb{E}\left[\prod\limits_{d_{i}\in \hat{\Phi}_m \setminus \mathcal{B}(g_m)}^{}\frac{1}{1+s P_m d_{i}^{-\eta}}\right] \nonumber \\
	&
	\overset{}{=}\exp\left(-2\pi\hat{\lambda}_m \int_{g_m}^{\infty}\frac{x}{1+{s}^{-1}{P_m}^{-1}x^\eta}\text{d}x \right)
	\end{align}
	}where $\mathcal{B}(g_m)$ denotes a disc of radius $g_m$ centered at origin, and is used to model an interference-free guard zone around the user with respect to tier $m$. The inner expectation in $(a)$ is over fading power while the outer expectation is over the PPP $\Phi_m$ of intensity $\lambda_m$ outside $\mathcal{B}(g_m)$. Next, we exploit the property of independent thinning of a PPP to deal with the transmission indicator and consider a (thinned) PPP $\hat{\Phi}_m$ with effective density $\hat{\lambda}_m= p_{\text{tr}}^{(m)} \lambda_m$ for $1\leq m\leq M$ and $\hat{\lambda}_m= p_{\text{tr,o}}^{} \lambda_m$ for $m=0$. As the fading is IID across links and from further conditioning over the location, we obtain $(b)$. The last equation follows by invoking the probability generating functional (PGFL) \cite{haenggi2012stochastic} of the PPP and by further algebraic manipulations. With some additional algebraic steps, (\ref{d4}) can be expressed in terms of a hypergeometric function, which with $s=\kappa\ell \delta_{u}^{\eta} d_K^{\eta} \theta$ gives
\allowdisplaybreaks{
		\begin{align}\label{L}
		\mathcal{L}_{I_m}(s)|_{s=\kappa\ell \delta_{u}^{\eta} d_K^{\eta} \theta}=\exp\left(-\pi\hat{\lambda}_m g_m^2
		\mathcal{F}\left(\frac{\delta_u^\eta d_K^\eta}{g_m^\eta}P_m\kappa\ell\theta,\eta\right)
		\right)
		\end{align} 
}where $\mathcal{F}(\cdot,\cdot)$ is given by (\ref{F}). 	
To compute the expectation of the term in (\ref{d44}) arising due to out-of-cluster TXs in $\Phi_0$, set $g_0=d_K$. This is because the cluster is assumed to consist of the $K$ closest nodes and interference is due to the nodes located outside this protection zone. For the interfering tiers $\{\Phi_m\}_{m=1}^{M}$, however, no such protection zone is assumed. Without an interferer-free protection zone (i.e., $g_m\to0$), $\mathcal{L}_{I_m}(s)$ further simplifies to
\begin{align}\label{Lap}
\mathcal{L}_{I_m}&(s)|_{s=\kappa\ell\delta_u^\eta d_K^\eta\theta}= \nonumber \\
&
\exp\left(-\pi\hat{\lambda}_m {\delta_u}^2 {d_K}^2\Gamma(1+2/\eta)\Gamma(1-2/\eta)(P_m\theta\kappa\ell)^{2/\eta}\right).
\end{align}
Evaluating the expectation in (\ref{d22}) using (\ref{L}), (\ref{Lap}), and further substituting $P_0=1$, $\hat{\lambda}_0=p_{\text{tr},o}\lambda_0$, and $\hat{\lambda}_m=p_{\text{tr}}^{(m)}\lambda_m$, we obtain the result in Theorem 1. \qed

\section*{Appendix B: Derivation of Proposition 1}
\label{App:Lemma_2}
\renewcommand{\theequation}{B.\arabic{equation}}
\setcounter{equation}{0}
Similar to Appendix A, we can express ${\bar{F}}_\gamma\left(K,\theta\right)=\mathbb{E}\left[\Pr\left[S_K >\theta d_K^\eta\left(I+{\sigma}^{2}\right)\right]\right]$, where $S_K=\sum\limits_{i=1}^{K}{\mathbbm{1}_{i}}\hat{H}_{i}$ and $\hat{H}_{i}={H}_{i}\omega_{i}^{-\eta}$. 
Recall that Proposition 1 is specialized to the case where the set $\Omega$ consists of distinct elements (i.e., $\tau=K$). Given a distinct cluster geometry, the CCDF of $S_K$ given in (\ref{ccdf1}) can be further simplified to the following form (for $x \geq 0$)
	\begin{align}\label{ccdf11}
	{\bar{F}}_{S_k}\left(x\right)&= G\displaystyle \sum\limits_{j=1}^{K}{\left(\frac{\sum\limits_{i=0}^{K-1}{\left({\alpha_i(\hat{\Omega})}-\alpha_i(\Omega)\right)(\omega_j^\eta)^i}}{\omega_j^\eta\left(\prod\limits_{l\neq j}^{K}\omega_l^\eta-\omega_j^\eta\right)}\right){e^{-\omega_j^\eta x}}} 
	\end{align} 
where (\ref{ccdf11}) follows by plugging $\tau=K$ and $n_u=1$ (for $u=1,\cdots,\tau$) in (\ref{ccdf1}).
Conditioning on the aggregate interference power $I$, we can write ${\bar{F}}_{\gamma|I}\left(K,\theta\right)={\bar{F}}_{S_K}\left(\theta d_K^\eta(I+\sigma^2)\right)$. 
Using (\ref{ccdf11}), and taking expectation with respect to $I$, we can express ${\bar{F}}_{\gamma}\left(K,\theta\right)$ (for $\theta\geq 0$) as
	\allowdisplaybreaks[0]{
	\begin{align}\label{d222}
	{\bar{F}}_{\gamma}&(K,\theta)= \nonumber \\ 
	& 
	G\displaystyle \sum\limits_{j=1}^{K}{\left(\frac{\sum\limits_{i=0}^{K-1}{\left({\alpha_i(\hat{\Omega})}-\alpha_i(\Omega)\right)(\omega_j^\eta)^i}}{\omega_j^\eta\left(\prod\limits_{l\neq j}^{K}\omega_l^\eta-\omega_j^\eta\right)}\right) \mathbb{E}\left[{e^{-\omega_j^\eta d_K^\eta \theta\left(I+\sigma^{2}\right)}} \right]}   
	\end{align} 
}where the expectation in (\ref{d222}) is over the aggregate interference power $I$. Unlike Appendix A where an approximation was used, the expectation in (\ref{d222}) can be directly evaluated using (\ref{d4}).
Since the PPPs are assumed to be independent, it follows that
\begin{align}\label{d33}
	\mathbb{E}\left[e^{-{d_j}^{\eta} \theta I}\right]=\mathbb{E}\left[e^{-d_j^\eta \theta I_0}\right]\prod\limits_{m=1}^{M}\mathbb{E}\left[e^{-d_j^\eta \theta I_m}\right].
\end{align}

As the rest of the derivation follows directly from Appendix A, some steps are omitted. Evaluating (\ref{d4}) at $s=d_j^\eta \theta$ yields
\begin{align}\label{LL}
		\mathcal{L}_{I_m}(s)|_{s=d_j^\eta\theta}=\exp\left(-\pi\hat{\lambda}_m g_m^2
		\mathcal{F}\left(\frac{d_j^\eta}{g_m^\eta}P_m\theta,\eta\right)
		\right).
\end{align} 
To compute the expectation of the term in (\ref{d33}) arising due to intrinsic interferers in $\Phi_0$, set $g_0=d_K$ in (\ref{LL}). Similarly,
for the interfering tiers $\{\Phi_m\}_{m=1}^{M}$, the expectation in (\ref{d33}) is given by 
\begin{align}\label{Lapp}
\mathcal{L}_{I_m}&(s)|_{s=d_j^\eta\theta}= \nonumber \\
&
\exp\left(-\pi\hat{\lambda}_m {d_j}^2\Gamma(1+2/\eta)\Gamma(1-2/\eta)(P_m\theta)^{2/\eta}\right).
\end{align}
Evaluating the expectation in (\ref{d222}) using (\ref{LL}), (\ref{Lapp}), and further substituting $P_0=1$, $\hat{\lambda}_0=p_{\text{tr},o}\lambda_0$, $\hat{\lambda}_m=p_{\text{tr}}^{(m)}\lambda_m$ and $d_j=\omega_j d_K$, yields the result in Proposition 1. \qed
\section*{Appendix C: Derivation of Theorem 2}
\label{App:Lemma_3}
\renewcommand{\theequation}{C.\arabic{equation}}
\setcounter{equation}{0}
	We begin the proof along the lines of \cite{lee2013base} by leveraging a known result on the PPP distance distribution. As shown in \cite{haenggi2005distance}, the distance $d_K$, between a typical user and its $K{\text{th}}$ closest TX, follows a generalized Gamma distribution, i.e., 
	\begin{align}\label{eq:c1}
	f_{d_K}(r)={\frac{2}{r\Gamma(K)}{\left(p_{\text{tr}}{\lambda}\pi r^2\right)}^Ke^{-p_{\text{tr}}{\lambda}\pi r^2}}.
	\end{align}
Plugging $\sigma^2=0$ in (\ref{deltaa}), and taking expectation with respect to $d_K$, we arrive at the expression in (\ref{intg}) (given at the top of the page)
where the last equation is obtained by using a dummy variable (given in (\ref{dummy}))
for integration, and using the definition of the Gamma function $\Gamma(K)=\int\limits_{0}^{\infty}e^{-x}x^{K-1}\text{d}x$. Unconditioning (\ref{cor1}) with respect to $d_K$, and using (\ref{intg}), we recover Theorem 2. \qed
\addtocounter{equation}{2}
\section*{Appendix D: Derivation of Theorem 3}
\label{App:AppendixD}
\renewcommand{\theequation}{C.\arabic{equation}}
\setcounter{equation}{0}
Leveraging the notation from Section \ref{secClus}, we define $P_\textrm{suc}\left(K,\theta\right)=\Pr\left[\gamma>\theta,x_1>cd_K\right]
=\mathbb{E}\left[\Pr\{\gamma>\theta,x_1>cd_K|d_K\}\right]=\mathbb{E}\left[\Pr\left[\gamma>\theta|d_K\right]\Pr\left[x_1>cd_K|d_K\right]\right]
$, where the expectation is with respect to the distance $d_K$. Note that $\Pr\left[\gamma>\theta|d_K\right]$ follows from (\ref{deltaa}) with $\sigma^2\rightarrow 0$, while $\Pr\left[x_1>cd_K|d_K\right]=e^{-\lambda_u\pi c^2d_K^2}$. Following steps similar to those in Appendix C, and using the approximation $c=\sqrt{\frac{C_1(K)}{1+C_2(K)/\beta}}$ proposed in Section \ref{secClus}, we recover the expression in Theorem 3. 
\balance



\end{document}